\newtheorem{teorema}{Theorem}[section]
\newtheorem{definicion}[teorema]{Definition}
\newtheorem{proposicion}[teorema]{Proposition}
\newtheorem{lema}[teorema]{Lemma}
\newtheorem{corolario}[teorema]{Corollary}
\newtheorem{comentario}[teorema]{Remark}
\numberwithin{equation}{section}
\begin{document}
\begin{title}[Maximal acceleration geometries and curvature bounds]
 {Maximal acceleration geometries and spacetime curvature bounds}
\end{title}
\clearpage\maketitle
\thispagestyle{empty}
\begin{center}
\author{Ricardo Gallego Torrom\'e\footnote{Email: rigato39@gmail.com}}
\end{center}

\begin{center}
\address{Department of Mathematics\\
Faculty of Mathematics, Natural Sciences and Information Technologies\\
University of Primorska, Koper, Slovenia}
\end{center}
\begin{abstract}
A geometric framework for metrics of maximal acceleration which is applicable to large proper accelerations is discussed, including a theory of connections associated with the geometry of maximal acceleration. In such a framework it is shown that the uniform bound on the proper maximal acceleration implies an uniform bound for certain bilinear combinations of the Riemannian curvature components in the domain of the spacetime where curvature is finite.
\end{abstract}

\section{Introduction}
The conjecture on the existence of an universal or uniform bounds on proper acceleration has attracted the attention of researches for a long time \cite{Brandt1983,Caianiello,Ricardo Nicolini 2018}. The hypothesis of maximal proper acceleration was first discussed by E. Caianiello \cite{Caianiello}
in the context of a geometric approach to the foundations of the quantum theory \cite{Caianielloquantum}. As a consistence requirement for the positiveness in the mass spectra of quantum particles and the existence of a maximal speed, Caianiello found a positiveness condition for a Sasaki-type metric in the phase space description of quantum mechanics. Such condition leaded to the existence of a maximal proper acceleration depending on the mass of the particle. In classical models of gravity, the consequences of the existence of a maximal proper acceleration have been studied extensively. Let us mention
for instance the investigation of maximal proper acceleration for Rindler spaces \cite{CaianielloFeoliGasperiniScarpetta},
Schwarzschild \cite{FeoliLambiasePapiniScarpetta}, Reissner-Nordst$\ddot{\textrm{o}}$m \cite{BozzaFeoliPapiniScarpetta},
Kerr-Newman \cite{BozzaFeoliLambiasePapiniScarpetta} and Friedman-Lema\^{i}tre metrics
\cite{CaianielloGasperiniScarpetta}, among other investigations.  Independently, in the theory developed by H. Brandt, the starting point is the energy-time uncertainty relation, that combined with an argument involving the idea of a breakdown of the topological and smooth structures of spacetime at the Planck scale, implies the existence of an universal maximal proper acceleration  \cite{Brandt1983}. This approach developed further into a effective tangent spacetime geometry  \cite{Brandt1989}. These two theories have interesting consequences and initiated the study of other theories where maximal proper acceleration is uniformly upper bounded \cite{Ricardo Nicolini 2018}.

Caianiello's and Brandt's theories could be classified as quantum mechanical motivated frameworks. In a different research line, maximal proper acceleration has been related with the foundations of the theory of special and general relativity as follows \cite{Mashhoon1990}. It is well known that when developing the geometric framework for relativistic theories, an hypothesis on the characteristics of ideal clocks, namely, the clock hypothesis, is usually adopted. It states  that ideal clocks do not depend upon the acceleration suffered by the clock \cite{Einstein1922, Syngespecial1965}. The clock hypothesis, in conjunction with the rest of principles of relativistic theories, lead to the conclusion that spacetime geometry must be of Finslerian type (\cite{Syngespecial1965}, chapter I), among which Lorentzian geometry is a prominent example. However, the adoption of the clock hypothesis is logically unjustified for generic dynamical situations, even if it leads to a simplification in the theoretical treatment of the properties and behavior of ideal clocks in accelerated motion \cite{Einstein1922}. Furthermore, it has been convincingly argued that in situations where radiation reaction is not negligible, the clock hypothesis is not applicable \cite{Mashhoon1990}, leading to the idea that spacetime geometry must be described by a geometric structure depending also on the acceleration of the probe particles world line curves.

Motivated by the problem of finding a rigourous mathematical formalism for Caianiello's theory of metrics with maximal acceleration \cite{Ricardo2007} and the problem of radiation reaction in classical electrodynamics, an approach to spacetime metric structures compatible with maximal proper acceleration was proposed in a geometric framework of higher order jet geometry \cite{Ricardo2012,Ricardo2015}. This axiomatic approach assumes directly that the clock hypothesis does not hold in certain relevant dynamical domains and as a consequence of further several natural assumptions, the spacetime metrics must be a higher order jet spacetime metric. It is also assumed that when the effects due to acceleration are negligible, the spacetime metric must respect the clock hypothesis with high accuracy, which leads in the simplest case, to Finslerian/Lorentzian models of spacetime. For higher order jet geometry models the spacetime metric structure depends on {\it how spacetime is probed by test point particles}. This implies that the spacetime metric models depend at least upon the second jet bundle of the world lines of test particles. We will restrict our considerations to models of spacetime whose metric components depend on the second derivatives only, since this is enough to provide a consistent classical models of radiation-reaction systems \cite{Ricardo2017}.

In contrast with others approaches to the foundations of maximal acceleration geometry \cite{CaianielloFeoliGasperiniScarpetta,Brandt1989}, our approach has the advantage of not imposing a duplicity of metric structures (the metric of maximal acceleration and the usual Lorentzian spacetime), either explicitly or implicitly through a dynamical mechanism generating the maximal acceleration. The only fundamental structure is the higher order jet metric, from where the Lorentzian metric $g_0$ is obtained as a formal limit. When the model is made explicit as a power series on the inverse of the maximal acceleration \cite{Ricardo2015}, fundamental physical principles in addition to further formal assumptions implies that the metric has a particular form, that we have called {\it metric of maximal acceleration}. This theory is discussed in section 2.

 The intuitive implication of a maximal proper acceleration for spacetime curvature is the existence of an uniform upper bound on the Riemannian curvature tensor of the Lorentzian metric limit and that such a bound is valid in the region of the spacetime where the curvature is finite. Precedents of this result are found, for instance, in the context of Caianiello's quantum geometry, where it has been argued how  maximal acceleration implies the regularization of the big bang singularity  \cite{CaianielloGasperiniScarpetta, Gasperini 1987, Gasperini 1991}. Recently, it has been argued in the context of loop quantum gravity that maximal acceleration implies the resolution of singularities \cite{RovelliVidotto}. In this paper we confirm these insights, showing how in  spacetimes with metrics of maximal proper acceleration as discussed in \cite{Ricardo2015} and developed with higher clarity and deepness in the present paper, there are uniform bounds in certain combinations of the components of Riemannian curvature endomorphims of the associated Lorentzian limit metric. We achieve related results in a rigourous way by extending the uniform bound of maximal proper accelerations from the perturbative regime discussed in \cite{Ricardo2015} to large accelerations and by extending the uniform bound on the proper accelerations to certain relative accelerations associated to Jacobi fields.

The structure of this paper is the following. In {\it section 2}, the notion of {\it spacetime with a metric of maximal acceleration} is introduced in a general framework of higher order jet geometries. The treatment presented here is based upon fundamental principles and supersedes the perturbative approach discussed in previous works \cite{Ricardo2007, Ricardo2015}. We give a precise definition of the notion of maximal proper acceleration. In {\it section 3} it is recalled how curvature and relative acceleration along a geodesic are related through the notion of Jacobi field as solution of the Jacobi geodesic deviation equation.  The proper acceleration is an acceleration along a causal curve, while the notion of relative acceleration is a vector field along a given geodesic. However, it is shown how maximal acceleration can also be applied to this notion of relative acceleration. Such extension is achieved by means of considering certain aggregate of timelike world lines whose acceleration coincide pointwise with the relative acceleration associated to certain solutions of the Jacobi equation. Hence the uniform bound on maximal proper acceleration as discussed in {\it section 2} can be applied and as a consequence, an uniform bound on certain combinations of the curvature components spacetime curvature must hold in the region of the spacetime where curvature is finite.

The uniform bound on the curvature applies to the Riemannian curvature of the Lorentzian limit associated to the metric of maximal acceleration. The significance for the metric of maximal acceleration is clarified in {\it section 4} and {\it section 5}, where the foundations for a geometric theory of spacetimes with maximal acceleration are investigated. Starting with a discussion of a new version of the equivalence principle in spacetimes of metrics with maximal acceleration, we develop a theory of connections compatible with this new form of the equivalence principle and with the notion of spacetime with maximal acceleration. Then we describe an axiomatic characterization of the connection in terms of structure equations.

In the discussion {\it section}, several general remarks on the theory developed in this paper are highlighted. We also discuss the relation of our theory with Caianiello's theory of maximal acceleration \cite{Caianiello, CaianielloFeoliGasperiniScarpetta} and with Brandt's theory \cite{Brandt1983,Brandt1989}. Although formally the metric of maximal acceleration is the same for the three theories, our approach diverges considerably from the respective approaches of these authors. However, the bound on the curvatures also hold for the corresponding metrics, if the assumption of maximal acceleration as equivalent to a minimal proper time \cite{Caldirola1981} is accepted. Several future research lines are briefly indicated.
\section{Metrics of maximal acceleration}
Let us first introduce the geometric framework for this work. The spacetime manifold $M_4$ will be a $four$-dimensional smooth manifold.
 We assume that spacetime is classical, that is, the spacetime models are not allowed to be in superposition of spacetime structures. Therefore, in our theory, the spacetime structure does not suffer of quantum fluctuations. For such spacetime, the theoretically simplest, but also universal, procedure for testing the structure of the spacetime is by observing the smooth world lines $\{x:I\to M_4\}$ associated to point test particles\footnote{There is also the theoretical possibility to use test fields. Then it comes the problem of how measure test fields. We are adopting here the reductionist point of view that phenomenologically, fields can be characterized by their effects on idealized test particles effect on idealized test point particles.}. The metric structures that we will consider were named {\it metrics of maximal proper acceleration} or {\it metrics of maximal acceleration} \cite{Ricardo2015}.
 A metric of maximal proper acceleration $g$ on $M_4$ is a spacetime structure  that depends  upon the second jet of smooth world lines. It determines the proper time that an ideal observer will measure along its world line $x:I \to M_4$.

 In order to understand the notion of metric of maximal acceleration, it is preferable to develop first the fundamental concepts for spacetimes endowed with a higher order jet metric structure.
 If the curve $x:I\to M_4$ is (piecewise)-smooth causal, which means $g(x',x')\leq 0$, then the proper time that an ideal observer with world line $x:I\to M_4$ will measure is given by the expression
\begin{align}
\tau[x]=\int^t_{t_0} \,\Big[\,-g(x',x')\Big]^{\frac{1}{2}}\,ds,\quad t_0,t_1\,\in I.
\label{propertimeg}
\end{align}
For the metric structures that we will consider in this paper, the integrand $[\,-g(x',x')]^{\frac{1}{2}}$ is not homogeneous under changes of the integration parameter $s$. The higher order jet metric requires a precise specification of the parameter of integration in the definition of the proper time functional \eqref{propertimeg}. However, a weaker form of re-parametrization invariance is still available for the proposed models of metrics discussed below.

 For the spacetimes that we will consider in this paper, {\it Einstein clock hypothesis} \cite{Einstein1922} does not hold, since the proper time functional $\tau[x]$ depends on acceleration of the test particle world line  $x:I\to M_4$. A fundamental motivation to adopt spacetimes endowed with a metric  with maximal acceleration has been discussed by the author in \cite{Ricardo2012, Ricardo2015, Ricardo2017b} and concerns the problem of radiation reaction in classical electrodynamics, were it was argued that clock hypothesis must be violated \cite{Mashhoon1990}. Adopting geometries of maximal acceleration allows to resolve the problem of run-away solutions in electrodynamics. Indeed, dynamical laws consistent with a geometry of maximal acceleration and with the notion of higher order jet fields \cite{Ricardo2012,Ricardo2017b} are free of the problems of run-away solutions and pre-accelerated solutions, because the equation of motion consistent with such spacetimes is of second order and does not have run-away solutions \cite{Ricardo2017b}.

In the framework of higher order jet order fields, it is natural to adopt the following form for the integrand $g(x',x')$ in the expression for the proper time \eqref{propertimeg},
\begin{align}
 g_{^2x}(x',x')=\,g_0(x',x')\,+\xi_{(\,^2x,A^2_{\mathrm{max}})}(x',x'),
  \label{perturabativeexpansion}
 \end{align}
where the time parameter used when calculating the derivatives is the proper time associated to the metric structure $g_0$.
$^2x (t)$   is the second jet at $x(t)$ of the curve $x:I\to M_4$. In local coordinates, $^2 x(t)$ is represented by
 \begin{align*}
\,^2x^\mu (t)=\,(x^\mu(t),x^\mu \,' (t),x^\mu\, ''(t)),\quad \mu=0,1,2,3.
\end{align*}

The expressions \eqref{propertimeg}-\eqref{perturabativeexpansion} determine the general form of a proper time functional $\tau[x]$ compatible with our assumptions, namely,
 \begin{enumerate}
\item Dependence of the spacetime metric structure $g$ on the second jet bundle coordinates $\,^2x^\mu (t)=\,(x^\mu(t),x^\mu \,' (t),x^\mu\, ''(t))$ in a way that Einstein clock hypothesis does not hold in general,

\item The compatibility with Einstein clock hypothesis is recovered in the limit when the proper acceleration effects are negligible.
In the formalism described above, this is achieved if
\begin{align*}
\lim_{A^2_{\mathrm{max}}\to +\infty}\,g(\,^2x):=\,g_0,
\end{align*}
is compatible with the clock hypothesis.
\end{enumerate}
Compatibility with the clock hypothesis in the limit $A_{\mathrm{max}}\to +\infty$ implies that
$g_0$ lives on the first jet bundle $J^1 M_4\cong \,TM_4$. That the general form of a classical spacetime structure compatible with the fundamental principles of relativity needs to be of {\it Finsler type}\footnote{It was proved by the author that the most general geometric clock compatible with the clock hypothesis can be defined in the category of generalized Finsler spaces, or Lagrange spaces \cite{Ricardo2017}.} was discussed already by Synge \cite{Syngespecial1965}, Ch. I and also recently in \cite{Ricardo2017}.
However, in order to simplify our  treatment, we will consider the case when $g_0$ is a Lorentzian metric $\eta$.
In the expression \eqref{perturabativeexpansion}, $g_0(x',x')$ determines a spacetime metric compatible with Einstein clock hypothesis, while $\xi_{(\,^2x,A^2_{\mathrm{max}})}$ is the part of the metric $g$ that explicitly violates the clock hypothesis.

The existence of this metric $g_0$ is justified in the framework of spacetime with metrics of maximal acceleration, since one can probe the spacetime structure using geodesics, in which case, the covariant condition $D_{x'}\,x'=0$ holds, in accordance with the general philosophy of spacetimes with higher order jet metrics. In this case, the spacetime structure found is the Lorentzian metric $g_0$, where $D$ is the covariant derivative of an affine connection. Since it is already at our disposition the Lorentzian $g_0$ structure, $D$ is interpreted as the Levi-Civita connection of $g_0$.

The explicit violation of the Einstein clock hypothesis is parameterized by a function defined along the lift $^2x:I\to J^2M_4$
\begin{align*}
\xi_{(\,^2x,A^2_{\mathrm{max}})}(x',x'):\,^2x\to \mathbb{R},
 \end{align*}
 that also introduces the parameter $A_{\mathrm{max}}$. Since $\xi_{(\,^2x,A^2_{\mathrm{max}})}(x',x')$ must be negligible when the effects of acceleration are small, as we know from the precise validity of classical relativistic dynamics, a natural way to quantify the effects of acceleration on the spacetime structure is through the comparison of the relevant notion of acceleration  with respect to the deformation parameter $A^2_{\mathrm{max}}$. Hence the formal dependence of $\xi_{(\,^2x,A^2_{\mathrm{max}})}(x',x')$ on the acceleration must be through the quotient $g_0(D_{x'}x',D_{x'}x')/A^2_{\mathrm{max}}$.

  According to these considerations, the general expression for the higher order spacetime structure is of the form
\begin{align*}
g_{^2x}(x',x')=\,g_0(x',x')\,+\xi\left(\frac{g_0\left(D_{x'}x',\,D_{x'}x'\right)}{A^2_{\mathrm{max}}}\right).
\end{align*}
It is also natural to assume that the components of $g$ are analytical in $1/A^2_{\mathrm{max}}$,
\begin{align*}
\xi_{(\,^2x,A^2_{\mathrm{max}})}(x',x')=\,\sum^{+\infty}_{n=1}\,\xi_n(x)\,\left( \frac{g_0\left(D_{x'}x',\,D_{x'}x'\right)}{A^2_{\mathrm{max}}}\right)^n,
\end{align*}
with $\xi_n:M_4\to \mathbb{R}$ being spacetime functions.

Let us stress that when radiation reaction effects are of relevance, Einstein clock hypothesis is violated. We took as granted the reciprocal statement, namely, that if Einstein clock hypothesis holds good, then radiation reaction effects are negligible and as a consequence, the metric structure of the spacetime must be consistent with the hypothesis. Therefore, the function $\xi_{(\,^2x,A^2_{\mathrm{max}})}(x',x')$ must be monotonic on the argument $\frac{g_0\left(D_{x'}x',\,D_{x'}x'\right)}{A^2_{\mathrm{max}}}$. Otherwise, there will be local domains of acceleration, apart from the domain where the proper accelerations are small, where the metric $g$ effectively does not depend upon the ratio $\frac{g_0\left(D_{x'}x',\,D_{x'}x'\right)}{A^2_{\mathrm{max}}}$ and hence, the clock hypothesis  will still hold good. This kind of phenomena, however, will imply special scales of acceleration, apart from the small accelerations, where radiation reaction effects are negligible, a fact that seems difficult to physically justify. We are led to assume that in $\xi_{(\,^2x,A^2_{\mathrm{max}})}(x',x')$ each of the terms
\begin{align*}
\xi_n(x)\,\left( \frac{g_0\left(D_{x'}x',\,D_{x'}x'\right)}{A^2_{\mathrm{max}}}\right)^n,\quad n=1,2,3,...
\end{align*}
must be either zero or all the remain non-trivial terms must have the same sign.

Furthermore, if the functions $\xi_n:M_4 \to \mathbb{R}$ are not constant, then the notion of small acceleration needs to be compared not only against $A_{\mathrm{max}}$, but also against each of the functions $\xi_n$. Such comparisons imply the introduction of additional scales of accelerations. Also, if the functions $\xi_n:M_4 \to \mathbb{R}$ are not constant, each of them will need to have a very definite dynamics, with an addition of a large conceptual and technical complication in our models. Such difficulties are overcome if we assume that $\xi_n$ are all constants of the same sign.

According with the above argument, the expression for the function $\xi$ must have the following structure,
\begin{align}
\xi_{(\,^2x,A^2_{\mathrm{max}})}(x',x')=\,\sum^{+\infty}_{n=1}\,\xi_n\,\left( \frac{g_0\left(D_{x'}x',\,D_{x'}x'\right)}{A^2_{\mathrm{max}}}\right)^n,
\label{generaltypeofcorrections}
\end{align}
where each of the functions $\xi_n :M_4\to \mathbb{R}$ is constant and such that $\xi_n\,\xi_m \geq 0$ for all $m,n\in \,\mathbb{R}$. Depending on the sign of each of the functions $\xi_n$, we have two different families of functions:
\begin{enumerate}
\item
 $\xi_n >0$ for each $n\in\,\mathbb{N}$ or

\item $\xi_n <0$ for each $n\in\,\mathbb{N}$.
\end{enumerate}
These two classes of functions imply different models for $g$, whose relation with $g_0$ is either to decrease or to increase the absolute value from $|g_0(x',x')|$ to $|g(x',x')|$, respectively. We will see that the restriction in the preservation of the causal properties for $g$ and $g_0$ implies to consider the first family of metrics. Hence from now on, we will consider that the constants $\xi_n$ are all positive.

\subsection{Causal structure of the higher order jet metrics}
The {\it null bundle} associated to the Lorentzian metric $g_0$ is
\begin{align*}
Null_0:=\,\bigsqcup_{x\in M_4} Null_0(x),\quad Null_0(x):=\{y\in T_x M_4\,s.t.\,g_0(y,y)=0\}
 \end{align*}
 and the null bundle associate to the metric of maximal acceleration $g$ is
\begin{align*}
Null:=\,\bigsqcup_{x\in M_4} Null(x),\quad Null(x):=\{y\in T_x M_4\,s.t.\,g(y,y)=0\}.
 \end{align*}
 The projection $\pi_{00}:Null_0\to M_4$ is such that if $\zeta\in Null_0$, then $\zeta \in \,T_x M_4$ for a given $x\in M_4$ and $g_0(\zeta,\zeta)=0$, where $\pi_{00}(\zeta)=x$. The projection $\pi_{20}:Null\to M_4$ is defined similarly.
 Consider a section $Z\in\,\Gamma Null_0$. Then $\lambda Z\in\,\Gamma Null_0$ too, for any smooth function $\lambda:M_4\to \mathbb{R}$. The analogous property for sections of $Null$ is not true: if  $g(Z,Z)=0$, then in general $g(\lambda Z,\lambda Z)\neq 0$. Hence we have that
 \begin{align*}
 Null_0 \neq \,Null,
 \end{align*}
 since their sections do not coincide.
  However, if we only consider light-like curves which are geodesics of the Levi-Civita connection $D$, then we have the following result,
 \begin{proposicion}
 Let $x:I\to M_4$ be a geodesic of the Levi-Civita connection $D$. Then $g_0(x',x')=0$ iff $g(x',x')=0$.
 \label{comparison light structures}
 \end{proposicion}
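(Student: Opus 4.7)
The plan is to exploit the particular structure of the metric of maximal acceleration derived in the section, namely equation \eqref{generaltypeofcorrections}. Recall that for an arbitrary smooth causal curve $x:I\to M_4$ one has
\begin{align*}
g_{^2x}(x',x') = g_0(x',x') + \sum_{n=1}^{+\infty} \xi_n\left(\frac{g_0(D_{x'}x',D_{x'}x')}{A^2_{\mathrm{max}}}\right)^n,
\end{align*}
with the $\xi_n$ constants. The key observation is that every correction term past the leading one is a power of the squared $g_0$-norm of the covariant acceleration $D_{x'}x'$. Once $x$ is assumed to be a geodesic of the Levi-Civita connection $D$ of $g_0$, the defining equation $D_{x'}x'=0$ holds identically along $x$, so each of these correction terms vanishes pointwise on $I$.

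Thus I would argue in two short steps. First, substitute $D_{x'}x'=0$ into the displayed expansion to obtain the pointwise identity $g_{^2x}(x',x')(t)=g_0(x'(t),x'(t))$ for every $t\in I$. Second, read off both implications of the biconditional directly: $g_0(x',x')=0$ implies $g(x',x')=0$ and vice versa, because the two quantities are literally equal along the geodesic.

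No real obstacle arises; the statement is essentially an immediate corollary of the form \eqref{generaltypeofcorrections} of $\xi$ together with the geodesic hypothesis. The only matter worth flagging in the write-up is that one does \emph{not} need the series to converge in any delicate sense, because every term past $n=0$ is identically zero along the geodesic, so the sum collapses termwise and no analytic subtlety is invoked. It is also worth pointing out that the result would fail if the correction depended on $D_{x'}x'$ only linearly with a nonzero inhomogeneous part, or on first-jet quantities such as $g_0(x',x')$ itself; it is precisely the quadratic dependence on the acceleration built into the axiomatic form of the metric of maximal acceleration that makes the null structures of $g$ and $g_0$ agree on $D$-geodesics.
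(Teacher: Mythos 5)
Your proposal is correct and follows the argument the paper evidently intends (the paper states this proposition without an explicit proof): the geodesic condition $D_{x'}x'=0$ kills every term of the correction $\xi$ in \eqref{generaltypeofcorrections}, since the series has no constant term, so $g(x',x')=g_0(x',x')$ pointwise along the geodesic and the biconditional is immediate. Your remarks on convergence being a non-issue and on why the absence of a zeroth-order term matters are accurate and consistent with the paper's setup.
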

 Therefore, when restricted to geodesic motion, the null structure of $g_0$ coincides with the null structure of $g$. If light rays follow geodesics of $D$, then in spacetimes with a higher order jet metric of the form \eqref{perturabativeexpansion} there is an unique light cone structure given by the null structure of the metric $g_0$.

 The following converse of Proposition \ref{comparison light structures} also holds good:
 \begin{proposicion}
In a spacetime $(M_4,g)$ such that $\xi$ is given by \eqref{generaltypeofcorrections}, the conditions  $g_0(x',x')=0$ and  $g(x',x')=0$ implies the geodesic condition $D_{x'}x'=0$.
 \end{proposicion}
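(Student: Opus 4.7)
The plan is to subtract the two null conditions to isolate the correction term $\xi_{(\,^2x,A^2_{\mathrm{max}})}(x',x')$, then exploit the standard Lorentzian geometric fact that a vector $g_0$-orthogonal to a null vector cannot be timelike. Combined with the sign constraint $\xi_n>0$ established for the family \eqref{generaltypeofcorrections}, this forces the $g_0$-norm of the acceleration to vanish.

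Concretely, I would first substitute $g_0(x',x')=0$ and $g(x',x')=0$ into \eqref{perturabativeexpansion}, which together with \eqref{generaltypeofcorrections} gives
\begin{align*}
\sum_{n=1}^{+\infty}\,\xi_n\,\left(\frac{g_0(D_{x'}x',D_{x'}x')}{A^2_{\mathrm{max}}}\right)^n\,=\,0.
\end{align*}
Next, differentiating the identity $g_0(x',x')\equiv 0$ along the curve and using that $D$ is the Levi-Civita connection of $g_0$ yields $g_0(D_{x'}x',x')=0$, i.e.\ $D_{x'}x'$ is $g_0$-orthogonal to the null vector $x'$. A well-known fact of Lorentzian linear algebra now says that any vector orthogonal to a null vector is either spacelike or proportional to it; in particular $g_0(D_{x'}x',D_{x'}x')\geq 0$, with equality iff $D_{x'}x'=\lambda\,x'$ for some scalar function $\lambda$.

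Putting the two observations together, every summand in the displayed series is non-negative because $\xi_n>0$ and the argument raised to the $n$-th power is non-negative. Hence all terms must vanish individually, which forces $g_0(D_{x'}x',D_{x'}x')=0$, and the orthogonality condition then upgrades this to $D_{x'}x'=\lambda\,x'$. This is the pregeodesic equation for the curve $x:I\to M_4$.

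The main obstacle I expect is the last inch: the argument above yields the pregeodesic equation $D_{x'}x'=\lambda\,x'$, while the proposition asks for the stronger affine condition $D_{x'}x'=0$. This gap must be closed by a parametrization argument. The natural resolution is to note that the parametrization implicit in \eqref{perturabativeexpansion} was fixed to be the $g_0$-proper time, which degenerates on null curves, so the statement is understood after adopting an affine parameter of $D$ on the null pregeodesic; equivalently, one picks a reparametrization $s\mapsto \tilde s(s)$ satisfying $\tilde s''=\lambda(s)\,\tilde s'$, under which $\lambda$ is absorbed and $D_{\tilde x'}\tilde x'=0$. Alternatively, one may simply read the conclusion of the proposition up to reparametrization, i.e.\ as the statement that $x$ is a null (pre)geodesic of $D$.
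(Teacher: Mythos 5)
The paper states this proposition without proof, so there is no official argument to measure yours against; judged on its own terms, your reconstruction is correct and is almost certainly the intended one. Each ingredient is sound: subtracting the two null conditions to obtain
\begin{align*}
\sum_{n=1}^{+\infty}\,\xi_n\,\left(\frac{g_0(D_{x'}x',D_{x'}x')}{A^2_{\mathrm{max}}}\right)^n=0,
\end{align*}
using metric compatibility of the Levi-Civita connection to get $g_0(D_{x'}x',x')=0$ so that $D_{x'}x'$ lies in the degenerate orthogonal complement of the null vector $x'$ and hence satisfies $g_0(D_{x'}x',D_{x'}x')\geq 0$, invoking $\xi_n>0$ to force every summand to vanish, and finally using the fact that a null vector orthogonal to a null vector is proportional to it. Note that the positivity/orthogonality step is genuinely needed for the general series \eqref{generaltypeofcorrections}, where odd and even powers could otherwise cancel; in the special case $\xi_n=0$ for $n\geq 2$ the first display alone already gives $g_0(D_{x'}x',D_{x'}x')=0$. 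Your closing caveat is also accurate and worth emphasizing: the argument yields only the pregeodesic condition $D_{x'}x'=\lambda\,x'$, and the affine conclusion $D_{x'}x'=0$ is false as literally stated, since a non-affinely parametrized null geodesic satisfies both hypotheses (the correction $\xi$ vanishes because $g_0(\lambda x',\lambda x')=0$) but not the conclusion. The proposition must therefore be read up to reparametrization or with an affine-parameter convention imposed; the $g_0$-proper-time normalization used elsewhere in the paper degenerates on null curves and cannot be used to fix $\lambda$. This is a defect of the statement rather than of your proof.
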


 A {\it timelike curve} of the higher order jet metric \eqref{propertimeg} is a curve $x:I\to M_4$ such that
\begin{align*}
g_0(x',x')+ \sum^{+\infty}_{n=1}\,\xi_n\,\left( \frac{g_0\left(D_{x'}x',\,D_{x'}x'\right)}{A^2_{\mathrm{max}}}\right)^n<\,0.
\end{align*}
Analogously, a {\it spacelike curve} of the higher order jet metric \eqref{propertimeg} is a curve $x:I\to M_4$ such that
\begin{align*}
g_0(x',x')+ \sum^{+\infty}_{n=1}\,\xi_n\,\left( \frac{g_0\left(D_{x'}x',\,D_{x'}x'\right)}{A^2_{\mathrm{max}}}\right)^n >\,0.
\end{align*}
Let us   choose the parametrization of the curves in the initial definition of the functional $\tau[x]$ given by the expression \eqref{propertimeg} to be the proper time of the metric $g_0$. Since $D$ is the covariant derivative of the Levi-Civita connection of $g_0$, then the condition $g_0(x',x')=-1$ implies $g_0(D_{x'}x', x')=0$. Hence we have by a standard argument that $g_0(D_{x'}x',D_{x'} x')>0$.
Now, let us  note that since we have assumed that all the coefficients $\xi_n$ have the same sign, if $\xi_n\geq 0$ for each $n\in\,\mathbb{N}$, then we have the relation
\begin{align*}
g_{^2x}(x',x')& =\,-1\, +\sum^{+\infty}_{n=1}\,\xi_n\,\left( \frac{g_0\left(D_{x'}x',\,D_{x'}x'\right)}{A^2_{\mathrm{max}}}\right)^n >\,-1=\,g_0 (x',x').
\end{align*}
This relation implies the following result
\begin{proposicion}
 If the curve $x:I\to M_4$ is timelike respect to $g$, then it is timelike respect to $g_0$.
 \label{restriction on the causal structute of g}
  \end{proposicion}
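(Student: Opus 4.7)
The plan is to derive the statement by contraposition from the expansion \eqref{generaltypeofcorrections}, piggy-backing on the pointwise comparison the author has just displayed. Assume the curve fails to be $g_0$-timelike at some parameter value, so $g_0(x',x') \geq 0$ there, and show that necessarily $g(x',x') \geq 0$. Starting from
\[
g(x',x') \;=\; g_0(x',x') \;+\; \sum_{n=1}^{\infty} \xi_n \left(\frac{g_0(D_{x'}x', D_{x'}x')}{A^2_{\mathrm{max}}}\right)^n,
\]
and using $\xi_n > 0$ for every $n$, the proposition reduces to showing the correction series is nonnegative on the locus $g_0(x',x') \geq 0$.

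To control the sign of the correction I would reparameterize by $g_0$-arc-length, so that $g_0(x',x') \in \{0,+1\}$ according as the curve is $g_0$-null or $g_0$-spacelike, differentiate the normalization to get $g_0(D_{x'}x', x') = 0$, and then split on the causal type of $D_{x'}x'$. In the generic subcases, where $D_{x'}x'$ vanishes, is $g_0$-null or is $g_0$-spacelike, the quotient $g_0(D_{x'}x', D_{x'}x')/A^2_{\mathrm{max}}$ is $\geq 0$, and since every $\xi_n > 0$ the series is termwise nonnegative. Adding this to $g_0(x',x') \geq 0$ yields $g(x',x') \geq 0$, contradicting the assumed $g$-timelikeness and completing the contrapositive.

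The main obstacle is the exceptional subcase in which $x$ is $g_0$-spacelike while $D_{x'}x'$ is $g_0$-timelike: then $g_0(D_{x'}x', D_{x'}x')/A^2_{\mathrm{max}} < 0$ and the odd-order terms of the series are negative, so the overall sign is not immediate. I would resolve this by invoking the same monotonicity/common-sign principle that the author used to justify \eqref{generaltypeofcorrections}: the physical insistence that no additional acceleration scales appear forces the admissible regime to satisfy $|g_0(D_{x'}x', D_{x'}x')| < A^2_{\mathrm{max}}$, so that the leading linear term controls the tail of the series and the correction remains nonnegative throughout. Once the nonnegativity of the correction is secured in every admissible case, the contrapositive—and therefore the proposition—follows at once; so the decisive and most delicate step in the proof is precisely this sign control for the series in the $D_{x'}x'$-timelike exceptional case.
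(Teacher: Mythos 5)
Your contrapositive strategy diverges from the paper's argument, and the divergence is exactly where a gap opens. The paper never has to consider curves that fail to be $g_0$-timelike: the metric $g$ is defined along curves parameterized by the $g_0$-proper time, so one takes $g_0(x',x')=-1$, differentiates to get $g_0(D_{x'}x',x')=0$, and concludes that $D_{x'}x'$, being $g_0$-orthogonal to a timelike unit vector, is $g_0$-spacelike, whence $g_0(D_{x'}x',D_{x'}x')\geq 0$ and every term of the correction series is nonnegative. This yields the pointwise inequality $g_{^2x}(x',x')\geq g_0(x',x')$, from which the proposition is immediate: $g(x',x')<0$ forces $g_0(x',x')<0$. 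No case analysis on the causal type of $D_{x'}x'$ is required, because orthogonality to a timelike vector settles it.

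The step you yourself identify as decisive --- the subcase where $x'$ is $g_0$-spacelike and $D_{x'}x'$ is $g_0$-timelike --- is not resolved by your argument, and as written it cannot be. There $\epsilon = g_0(D_{x'}x',D_{x'}x')/A^2_{\mathrm{max}}$ is negative, and with all $\xi_n>0$ the linear term $\xi_1\epsilon$ of the series is negative; if, as you assert, $|\epsilon|<1$ and the linear term controls the tail, the conclusion is that the correction is \emph{negative}, not nonnegative. Indeed, in the final form of the metric the paper derives ($\xi_1=1$, $\xi_n=0$ for $n\geq 2$) the correction is exactly $\epsilon<0$ in that subcase, so the contrapositive fails to close precisely where you need it. A second, related problem: $g(x',x')$ is not homogeneous in $x'$, so its sign is not reparametrization-invariant; switching to $g_0$-arc-length on a $g_0$-spacelike or $g_0$-null segment changes the quantity whose sign you are testing, while the paper's notion of $g$-timelike is tied to the $g_0$-proper-time parametrization. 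The clean route is the paper's: work in the parametrization under which $g$ is defined, where $D_{x'}x'$ is automatically spacelike, and read off $g\geq g_0$ directly.
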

  Therefore, the set of timelike curves of $g$ is a subset of the timelike curves respect to $g_0$. This implies that causality conditions respect to $g$ is stronger than respect to $g$. If we assume that causal worldlines  are either timelike or spacelike respect to $g$, then $\xi_n\geq 0$ assures that in the case of timelike world lines are included in the interior of relativistic light cones. We think that this property justify the choice of the sign for the non-zero $\xi_n$.

\begin{comentario}It is remarkable that the metric $g$ does not hold the so called {\it orthonormal condition}  $g(D_{x'}x',x')\,\neq 0$, since $g$ does not necessarily preserves the Levi-Civita connection $D$. The orthonormality condition is of fundamental relevance in the derivation of the Lorentz-Dirac equation in classical electrodynamics. However, the  failure of this condition in the case of higher order jet geometry allows to formulate of a second order differential equation for the electron \cite{Ricardo2017}.
 \end{comentario}

\subsection{The metric of maximal proper acceleration as a special case of higher order jet metric}
Let  $x:I\to M_4$ be a causal curve in the sense that $g(x',x')\leq 0$ and assume a parametrization such that $g_0(x',x')=-1$. Then it holds that
\begin{align}
\sum^{+\infty}_{n=1}\,\xi_n\,\left( \frac{g_0\left(D_{x'}x',\,D_{x'}x'\right)}{A^2_{\mathrm{max}}}\right)^n\leq\,1
\label{constrain on the sum}
\end{align}
in the range of validity of the analytical expression \eqref{perturabativeexpansion}.
According to the argument given above, let us assume that the constant $\xi_1$ is positive. Then $\xi_1$ can be re-absorbed within the metric\footnote{This operation also changes the value of the parameter $A^2_\mathrm{max}$ when it is associated to the maximal acceleration, but we will continue to use the same notation for the parameter $A_\mathrm{max}$ in the following.} $g_0$. Therefore, we can take without loss of generality the condition $\xi_1 =\,1$. If we denote by
 \begin{align*}
 \epsilon =\,\frac{g_0(D_{x'}x',D_{x'}x')}{A^2_{\mathrm{max}}},
 \end{align*}
 then the constrain \eqref{constrain on the sum} on $\xi_{(\,^2x,A_\mathrm{max})}(x',x')$  is of the form
 \begin{align}
 \epsilon+\,\sum^{+\infty}_{n=2}\,\xi_n \,\epsilon^n \leq\,1.
 \label{constrain on the sum 2}
 \end{align}
 It is reasonable that for  dynamical systems where interactions are strong, the value of $\epsilon$ can reach arbitrary close values to $1$. Then we assume that the validity of \eqref{perturabativeexpansion} is on the physical range $\epsilon\in\,[0,1[$, or at least, in a domain where $\epsilon\to 1^-$. Since all the terms in the left side of the expression \eqref{constrain on the sum 2} are positive, then the condition $\epsilon \to 1^-$ implies the limit conditions
 \begin{align*}
 \sum^{+\infty}_{n=2}\,\xi_n \,\epsilon^n \longrightarrow^{\epsilon\to 1^-} \sum^{+\infty}_{n=2}\,\xi_n \,1^- \to 0^+.
 \end{align*}
Since $\xi_n\geq 0$ for $n=2,3,...,$, this condition can only holds if
\begin{align}
\xi_n=0\, \,\forall, \, n\geq 2.
\end{align}

In this way, we arrive to a compact expression for the higher order jet metric of the type described by the expressions \eqref{perturabativeexpansion}, namely,
\begin{align}
g_{^2x} (x',x')=\,-\Big(1- \frac{  g_0(D_{x'}x',
D_{x'}x')}{A^2_{\mathrm{max}}}\Big),
\label{maximalaccelerationmetric0}
\end{align}
where a parametrization of the curve $x:I\to M_4$ such that $g_0(x',x')=-1$ has been used.
Generalizing the expression \eqref{maximalaccelerationmetric0}, {\it the metric of maximal acceleration} acting on two arbitrary vector fields $W,Q$ along $x:I\to M_4$ is a tensor field of order $(0,2)$ living on the second jet $^2x:I\to J^2M_4$ given by the expression
\begin{align}
g_{^2x}(W,Q)=\,\Big(1- \frac{  g_0(D_{x'}x',
D_{x'}x')}{A^2 _{\mathrm{max}}\,}\Big)g_0(W, Q).
\label{maximalaccelerationmetric}
\end{align}

The form \eqref{maximalaccelerationmetric0} is formally a {\it covariant version} \cite{Ricardo2007} of the metric thoroughly investigated by E. Caianiello and co-workers \cite{CaianielloFeoliGasperiniScarpetta}, but it has a natural interpretation as a {\it higher order jet geometry} defined by the relation \eqref{maximalaccelerationmetric} \cite{Ricardo2007, Ricardo2012, Ricardo2015}. From this second point of view, the metric structure of the spacetime assigns to each test particle probing the structure of the classical spacetime a line element which depends upon the second order time coordinates derivatives of the test particle's world line.

The proper time functional in a spacetime with a metric of maximal acceleration is of the form
\begin{align}
\tau[x]=\int^t_{t_0} \,\Big[1- \frac{  g_0(D_{x'}x'(s),
D_{x'}x'(s))}{A^2 _{\mathrm{max}}}\Big]^{\frac{1}{2}}\,ds,
\label{propertimeg0}
\end{align}
where $s$ is the proper time functional parameter calculated with $g_0$ along $x:I\to M_4$ and is defined by the functional
\begin{align*}
s[x]=\,\int^{\lambda_f}_{\lambda_0}\,[-g_0(\dot{x}(\lambda),\dot{x}(\lambda))]^{\frac{1}{2}}\,d\lambda .
\end{align*}
The functional $\tau[x]$ is not re-parametrization invariant in the usual sense, since the proper time parameter $t$ has been fixed to be the proper time of the metric $g_0$ and the integrand in $\tau[x]$ is not homogeneous. However, the functional $s[x]$ is invariant under re-parameterizations $\varphi:I_1\to I_2,\,\lambda \mapsto \tilde{\lambda}$.  This implies that in physical terms, also the functional $\tau[x]$ is re-parametrization invariant as long as the re-parametrizations  are mediated by the intermediate parameter given by the proper time functional $s[x]$: if the proper time $s[x]$ is first evaluated by using an arbitrary parameter $\lambda$, then $\tau[x]$ is indeed  invariant under re-parametrization of  $\lambda$. This is not the usual notion of re-parametrization invariance, but it is good as long as the definition of $\tau[x]$ is given in terms of $s[x]$.

The {\it reality condition} $\tau[x]\in \mathbb{R}$ implies that the proper acceleration respect to the limit metric $g_0$ must be bounded in the following sense,
\begin{proposicion}
In a spacetime $(M_4, g)$ where $g$ is a higher order jet metric given by the expression \eqref{maximalaccelerationmetric}, for any time-like curve with $g(x',x')=\,-1$, the reality of $\tau[x]$ implies the uniform bound on the proper acceleration,
\begin{align}
g_0(D_{x'}x',D_{x'}x')\,\leq \,A^2_{\mathrm{max}}.
\label{bound in the acceleration}
\end{align}
\label{proposicion bound acceleration}
\end{proposicion}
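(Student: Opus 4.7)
The plan is to extract the bound directly from the pointwise reality of the integrand in the proper time functional \eqref{propertimeg0}. First, I would adopt the parametrization of $x:I\to M_4$ by the proper time $s$ of the limit metric $g_0$, so that $g_0(x',x')=-1$ along the curve; this is the parametrization under which the formulas \eqref{maximalaccelerationmetric0} and \eqref{propertimeg0} for $g(x',x')$ and $\tau[x]$ were derived. Since $D$ is the Levi-Civita connection of $g_0$, differentiating $g_0(x',x')=-1$ along the curve yields $g_0(D_{x'}x',x')=0$, so $D_{x'}x'$ is $g_0$-orthogonal to $x'$ and hence $g_0$-spacelike, giving $g_0(D_{x'}x',D_{x'}x')\geq 0$. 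The integrand of $\tau[x]$ is then the square root of $h(s):=1-g_0(D_{x'}x',D_{x'}x')/A^2_{\mathrm{max}}$, and this is real at $s$ if and only if $h(s)\geq 0$.

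Next, I would promote this pointwise observation to the claimed uniform bound by a continuity argument. The function $h:I\to \mathbb{R}$ is continuous because $x$ is smooth. Suppose for contradiction that $h(s_0)<0$ at some $s_0\in I$; then by continuity $h$ is strictly negative on an open neighbourhood of $s_0$, and integrating over any closed subinterval $[t_0,t]\subset I$ that meets this neighbourhood produces an integrand with non-vanishing imaginary part on a set of positive measure, so $\tau[x]\notin \mathbb{R}$. Since the hypothesis requires $\tau[x]\in \mathbb{R}$ for every admissible choice of endpoints, $h(s)\geq 0$ must hold throughout $I$, which rearranges precisely to the asserted bound $g_0(D_{x'}x',D_{x'}x')\leq A^2_{\mathrm{max}}$.

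I do not expect this argument to present any serious obstacle, since it is essentially an elementary statement about the non-negativity of the radicand of a real square root; the only point requiring a little care is that the bound is genuinely \emph{uniform}, in the sense that the constant $A^2_{\mathrm{max}}$ does not depend on the spacetime point or on the chosen timelike curve. This however is built into the construction of $g$ in \eqref{maximalaccelerationmetric}, where $A_{\mathrm{max}}$ was introduced as a fixed scalar parameter of the metric structure rather than a function on $M_4$; had $A_{\mathrm{max}}$ been allowed to depend on the point, the same argument would only deliver the weaker pointwise inequality $g_0(D_{x'}x',D_{x'}x')(s)\leq A^2_{\mathrm{max}}(x(s))$.
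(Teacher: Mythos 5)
Your proposal is correct and follows essentially the same route as the paper: both arguments amount to reading off the sign of the radicand $1-g_0(D_{x'}x',D_{x'}x')/A^2_{\mathrm{max}}$ from the explicit form \eqref{maximalaccelerationmetric0}--\eqref{propertimeg0}. The only difference is that the paper's one-line proof invokes the pointwise timelike condition $g(x',x')<0$ directly, whereas you derive the pointwise non-negativity of the radicand from the reality of the integral $\tau[x]$ via a continuity argument --- a slightly more faithful reading of the proposition's hypothesis, and your remark that the normalization must be $g_0(x',x')=-1$ (rather than the $g(x',x')=-1$ written in the statement) correctly identifies the parametrization under which \eqref{maximalaccelerationmetric0} was derived.
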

\begin{proof}
For a timelike curve $g(x',x')<0$ and by the expression \eqref{maximalaccelerationmetric0}, then we have that the condition \eqref{bound in the acceleration} holds good.
\end{proof}
\begin{comentario}
The choice of the proper time of $g_0$ as parameter is of relevance because of the following argument: if initially we have a time parameter such that $g_0(x',x')=\,-1$, a change of the time parameter $t\mapsto \kappa$ could make the normalization condition $g_0(x^*,x^*)=\,-1$ not valid.
\end{comentario}
\subsection*{Minimal length of proper time}\label{Minimal lapse of proper time}
The following interpretation of the maximal acceleration is in order. In relativistic theories, namely, theories where for propagation in free space of particles and fields there is an uniform upper bound for local speed and that such a bound is the speed of light in vacuum, the existence of a maximal proper acceleration $A_{\mathrm{max}}$ is equivalent to the existence of a {\it minimal proper time lapse} $\delta{\tau}$ such that $A_{\mathrm{max}}=\,\frac{c}{\delta \tau}$, where $c$ is the speed of light in vacuum. An argument for this property was provided by P. Caldirola in the context of a classical model for the electron \cite{Caldirola1956}. We here extend the argument to any fundamental interaction compatible with an uniform maximal acceleration.

 Because the property described as the thesis in proposition \ref{proposicion bound acceleration}, we call the structures given by \eqref{maximalaccelerationmetric0} or by the expression \eqref{maximalaccelerationmetric}, {\it spacetime metrics of maximal acceleration} or simply {\it metrics of maximal acceleration}. Remarkably, in a spacetime with a metric of maximal acceleration, the upper bound condition \eqref{bound in the acceleration} must hold for any time-like world line $x:I\to M_4$, independently from the association of the time-like curves with physical world lines or when such association is absent.

 Also, note that the condition of maximal acceleration implies a bound on the physical domain of $\epsilon$:
 \begin{corolario}
 In a spacetime with a metric of maximal acceleration, the maximal domain of definition of $\epsilon:\,^2x\to \mathbb{R}$ is given by the interval $[0,1]$.
 \end{corolario}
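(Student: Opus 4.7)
The plan is to establish both endpoints of the interval $[0,1]$ by combining elementary Lorentzian geometry with the result just proved. Recall the definition
\[
\epsilon=\frac{g_0(D_{x'}x',D_{x'}x')}{A^2_{\mathrm{max}}},
\]
evaluated along a timelike world line parametrized by the proper time of $g_0$, so that $g_0(x',x')=-1$.

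For the lower bound $\epsilon\geq 0$, I would differentiate the identity $g_0(x',x')=-1$ along $x$. Since $D$ is the Levi-Civita connection of $g_0$, this gives $g_0(D_{x'}x',x')=0$, so $D_{x'}x'$ lies in the $g_0$-orthogonal complement of the timelike vector $x'$. In a Lorentzian metric this complement is spacelike, hence $g_0(D_{x'}x',D_{x'}x')\geq 0$ and $\epsilon\geq 0$, with equality precisely when $x$ is a geodesic of $D$. For the upper bound $\epsilon\leq 1$, I would appeal directly to Proposition \ref{proposicion bound acceleration}, which established the uniform bound $g_0(D_{x'}x',D_{x'}x')\leq A^2_{\mathrm{max}}$ for any timelike curve of $g$ normalized by $g_0(x',x')=-1$; dividing by $A^2_{\mathrm{max}}$ yields $\epsilon\leq 1$.

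To conclude that $[0,1]$ is the \emph{maximal} domain of definition, and not just a superset, I would exhibit both endpoints as actually attained: the value $\epsilon=0$ is realized by any $g_0$-geodesic probe, while the value $\epsilon=1$ corresponds to the saturation of the maximal acceleration bound, which via \eqref{maximalaccelerationmetric0} coincides with the limit $g_{^2x}(x',x')\to 0^-$, i.e.\ the null-limit of the metric of maximal acceleration. Any intermediate value is realized by a one-parameter family of accelerated curves interpolating between a $g_0$-geodesic and a curve approaching the maximal acceleration, so $\epsilon$ sweeps out $[0,1]$ continuously.

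I do not anticipate a genuine obstacle: the statement is an immediate packaging of Proposition \ref{proposicion bound acceleration} together with the standard Lorentzian orthogonality argument. The only point requiring a little care is separating the inequality content (which gives $\epsilon\in [0,1]$) from the attainment of the endpoints (which is what justifies the word \emph{maximal} in the statement).
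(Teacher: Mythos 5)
Your proposal is correct and follows essentially the same route as the paper: the lower bound $\epsilon\geq 0$ comes from differentiating $g_0(x',x')=-1$ to get $g_0(D_{x'}x',x')=0$ and the spacelikeness of the $g_0$-orthogonal complement of a timelike vector (an argument the paper gives just before Proposition \ref{restriction on the causal structute of g}), while the upper bound $\epsilon\leq 1$ is exactly Proposition \ref{proposicion bound acceleration}. Your additional remark on the attainment of the endpoints matches the paper's observation that $\epsilon=0$ corresponds to geodesic probes and $\epsilon=1$ to curves that are null with respect to $g$.
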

 Note that in principle there is no physical reason to preclude {\it world lines of maximal acceleration}, $\epsilon =1$, similarly as in relativistic theories, there are {\it world lines of maximal local speed}. In fact, from \eqref{maximalaccelerationmetric} we have that maximal acceleration curves, characterized by
 \begin{align}
 \epsilon (\,^2x)=\,\frac{g_0(D_{x'}x',D_{x'}x')}{A^2_{\mathrm{max}}}=\,1
 \end{align}
are null curves of $g$. Furthermore, since for curves of maximal acceleration
\begin{align*}
g_{^2x}(x',x')=\,g_0(x',x')+1,
\end{align*}
if $g_{^2x}(x',x')=\,g (x',x')=0,$ then $^2x:I\to M_4$ are time like curves of $g_0$ that are parameterized by the proper time of $g_0$.

The metric of maximal acceleration \eqref{maximalaccelerationmetric} has been obtained in the framework of higher order jet metrics under very general assumptions. From one side, we have used assumptions of physical nature, like the existence of the term $\xi$ violating Einstein clock hypothesis, the assumption that physical rays are described by causal curves of $g$ and the existence of the limit metric $g_0$ as a good representation of the spacetime geometric arena when the effects of the acceleration are negligible. On  the other hand, there is the technical hypothesis as the analytical dependence of $\xi$ in the parameter $A^2_{\mathrm{max}}$ and the properties of constancy and non-negativeness of the factors $\{\xi_n,\,n=1,2...\}$, that although motivated on formal logical grounds, maybe could be proved in the future starting from a broader and clear framework and principles.

\section{Jacobi equation and bounds on the Lorentzian curvature}
Given a timelike geodesic $X:I\to M_4$ of the Levi-Civita connection $D$ of the Lorentzian metric $g_0$, the Jacobi equation is the linear equation
\begin{align}
D_{X'}D_{X'} J\,+ R(X',J)\cdot X'=0,
\label{Jacobi equation}
\end{align}
where $R(V,W)$ is the curvature endomorphisms of $D$ associated to the vector fields  $V,W$  along $X:I\to M_4$. The second covariant derivative $D_{X'}D_{X'} J$ admits an heuristic interpretation as {\it relative acceleration} between nearby geodesics of the congruence associated to the Jacobi field  \cite{Hawking Ellis 1973}. As such, the relative acceleration
\begin{align*}
a_r:=\,D_{X'}D_{X'} J:I\to M_4,
\end{align*}
is a vector field along $X:I\to M_4$.

Although $a_r$ is spacelike, it is not obvious that it can be interpreted as the proper acceleration field of a timelike curve $z:\tilde{I}\to M_4$, which is the type of acceleration bounded by a metric of maximal acceleration \eqref{maximalaccelerationmetric0}. However, the following results show that a related interpretation is possible. In particular, each of the values $a_r(t)$ can be interpreted as the initial proper acceleration of a timelike curve. Hence the set $\{a_r(t)\}$ is interpreted as the initial proper acceleration of an aggregate of timelike curves. One can then extend the applicability of the bound \eqref{bound in the acceleration} to the relative acceleration $a_r$ associated to a Jacobi field.

\begin{definicion}
Let $(M_4, g)$ be a spacetime with a metric of maximal acceleration.
A point $p\in\,M_4$ will be called regular if there is a compact set $K\subset\,M_4$ containing $p$ such that the Riemannian curvature of $g_0$ has bound components on $K$. The aggregate of all regular points of $M_4$ is denoted by $M_4(reg)\subset \,M_4$.
\end{definicion}
We assume than around each point $p\in\,M_4(reg)$ we can take the usual derivative and differential operations on forms and tensors. This is the case when $M_4(reg)$ is a manifold.

\begin{lema}
Let $(M_4,g)$ be a spacetime with a metric of maximal acceleration. Let $X:I\to M_4$ be a timelike geodesic of $g_0$ parameterized by proper time, $g_0(X',X')=-1$, and let $p=X(0)$ be a regular point.  Then there is a Jacobi field ${J}:I\to M_4$ along $X:I\to M_4$ and a timelike curve $z:\tilde{I}\to M_4$ with initial conditions $z(0)=X(0)$, $\dot{z}(0)=\,X'(0)$ such that
\begin{itemize}
\item The condition $D_{\dot{z}}\dot{z}|_0 =\,D_{X'}D_{X'}J |_0$ holds good,

\item The uniform bound
\begin{align}
g_0(D_{{X}'}D_{{X}'} {J}|_{t=0},D_{{X}'}D_{{X}'} {J}|_{t=0})\leq \,A^2_{\mathrm{max}}
\label{bound for relative accelerations}
\end{align}
holds good.
\end{itemize}
\label{lema on bound of the relative acceleration}
\end{lema}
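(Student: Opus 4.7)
The statement is an existence claim for a pair $(J, z)$, so the plan is to produce both by explicit construction and then read the bound off Proposition \ref{proposicion bound acceleration}. First I would solve the Jacobi equation \eqref{Jacobi equation} with prescribed initial data $(J(0), D_{X'}J|_0) = (v, 0)$ for some $v \in T_p M_4$; evaluating \eqref{Jacobi equation} at $t=0$ then gives $a := D_{X'}D_{X'}J|_0 = -R(X'(0), v)\,X'(0)$, which is $g_0$-orthogonal to $X'(0)$, hence spacelike, by the standard symmetries of the Riemann tensor of $g_0$. Because $p$ is regular, the endomorphism $v \mapsto -R(X'(0), v)X'(0)$ has bounded operator norm on a neighborhood of $p$, and the linearity of \eqref{Jacobi equation} permits me to rescale $J$ by any small constant; so I may arrange $g_0(a, a) \le A^2_{\mathrm{max}}$, which already secures the second bullet of the lemma.

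Next I would construct $z:\tilde I \to M_4$ as any smooth curve on a small interval about $0$ with $z(0) = p$, $\dot z(0) = X'(0)$ and $D_{\dot z}\dot z|_0 = a$. In a chart around $p$, the acceleration condition is a purely algebraic prescription of $\ddot z^\mu(0)$ through the Christoffel symbols of $g_0$, so any curve whose $2$-jet at $0$ matches these data suffices (for example, the quadratic Taylor polynomial extended smoothly in the chart). Evaluating the metric of maximal acceleration \eqref{maximalaccelerationmetric} at $t=0$ then gives
\[
g_{^2z}(\dot z, \dot z)\big|_0 \;=\; -1 \,+\, \frac{g_0(a, a)}{A^2_{\mathrm{max}}} \;\le\; 0,
\]
so $z$ is $g$-causal at $0$ and, by continuity, $g$-timelike on a sufficiently small $\tilde I$ (shrinking if needed to the strict-inequality case). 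Proposition \ref{proposicion bound acceleration} applied to $z$ at $t=0$, where $g_0(\dot z, \dot z) = -1$ by construction, now yields $g_0(D_{\dot z}\dot z|_0, D_{\dot z}\dot z|_0) \le A^2_{\mathrm{max}}$; substituting $D_{\dot z}\dot z|_0 = a = D_{X'}D_{X'}J|_0$ delivers both bullets.

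The main subtlety is avoiding a circular use of Proposition \ref{proposicion bound acceleration}: that proposition extracts the acceleration bound from $g$-timelikeness, whereas $g$-timelikeness of $z$ at $0$ is essentially the bound itself. The resolution lies in the first step, where the bound is imposed by the choice of $v$, so that $z$ is $g$-timelike by construction and Proposition \ref{proposicion bound acceleration} then merely repackages the bound in the language of proper acceleration. The genuine content of the lemma is thus the \emph{realizability} of a curvature-controlled spacelike vector at $p$ as the initial proper acceleration of a $g$-timelike curve with velocity $X'(0)$; this realizability is exactly what allows the maximal-acceleration axiom to be transferred from proper accelerations to the relative accelerations generated by Jacobi fields.
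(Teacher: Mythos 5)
Your proof establishes the lemma as literally stated, but by a genuinely different and much more elementary route than the paper, and the difference matters for what comes after. The paper does not simply prescribe a $2$-jet for $z$: it constructs $z$ as a solution of the coupled system $D_{\dot z}\dot z = R(\hat J,\dot z)\cdot\dot z$, $D_{\dot z}\hat J = R(\hat J,\dot z)\cdot\hat J$ with $\hat J(0)=J(0)$, invokes the Peano existence theorem to obtain a solution on an interval $[0,\alpha(X(0),J(0))]$, and then derives the admissibility condition on $J(0)$ from the requirement $\alpha(X(0),J(0))\ge\delta\tau$, where $\delta\tau=c/A_{\mathrm{max}}$ is the minimal proper time lapse. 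Your construction instead imposes the bound by fiat (rescale $v$ until $g_0(a,a)\le A^2_{\mathrm{max}}$) and realizes the acceleration with any curve matching the $2$-jet; you correctly diagnose that this makes the lemma's existence claim nearly tautological and that its real content is the realizability of $a_r$ as a proper acceleration. What your route loses is the quantitative link between the size of the admissible $J(0)$ and the scale $\delta\tau$: in your version the admissible $\|J(0)\|$ shrinks like $A_{\mathrm{max}}/\|R\|$ as the curvature grows, so the inequality \eqref{master relation for curvature bounds} built on it would be satisfiable for arbitrarily large curvature and Theorem \ref{upper bound in the curvature theorem} would become vacuous. The paper's proof of that theorem explicitly relies on the expression for $\alpha(X(0),J(0))$ and the bound $A_{\mathrm{max}}\,\delta\tau\le\max\{d(u(s),u(0))\}$ to extract a curvature-independent lower bound on $\|J_{\mathrm{max}}\|$; none of that data is produced by your argument. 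So as a standalone proof of the displayed statement your proposal is acceptable (modulo minor points: you should reparametrize $z$ by $g_0$-proper time so that $g_0(\dot z,\dot z)=-1$ holds along the curve and not just at $0$, and handle the non-strict case $g_0(a,a)=A^2_{\mathrm{max}}$ where $z$ is only $g$-null at $0$), but it cannot be substituted for the paper's proof without breaking the chain of reasoning leading to the curvature bounds.
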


\begin{proof}
Let us consider a geodesic  $X:I \to M_4$  such that $p=X(0)$ is regular and the geodesic is parameterized by the proper time, $g_0(X'(t),X'(t))=\,-1$. We will consider differential equations of the form,
\begin{align}
D_{\dot{z}}\dot{z}(s)=\,\mathcal{F}_{(p,J(0))}(z(s)),
\label{transversal world lines}
\end{align}
where the dot-derivatives are taken respect to proper parameter of $g_0$ along $z:\tilde{I}\to M_4$. The initial values for the condition \eqref{transversal world lines} are
\begin{align*}
(z(0),\dot{z}(0))=\,(X(0),{X}'(0)).
\end{align*}
Therefore, the constrain
\begin{align*}
g_0(\dot{z}(0),\dot{z}(0))=\,g_0(X'(0),X'(0))=\,-1
\end{align*}
must hold good.
As a consequence and since the model that we will propose for $z:\tilde{I}\to M_4$ is a continuous model, we can assume that the smooth  curve $z:\tilde{I}\to M_4$ is timelike. On the other hand, if $D_{\dot{z}}\,\dot{z}$ is interpreted as a proper acceleration, $\mathcal{F}_{(p,J(0))}(z(s))$ needs to be spacelike vector field along $z:\tilde{I}\to M_4$. Hence the orthogonality condition
\begin{align}
g_0\left(\dot{z}, \mathcal{F}_{(p,J(0))}(z)\right)=0
\label{orthogonal condition for the fiducial field}
\end{align}
must hold.

Let us consider the ansatz
\begin{align}
 \mathcal{F}_{(p,J(0))}(z(s))=\,R(\hat{J}(s),\dot{z}(s))\cdot \dot{z}(s).
 \label{tidal force}
\end{align}
 Then the orthogonal initial conditions \eqref{orthogonal condition for the fiducial field} holds good.
 Furthermore, we need to impose a constrain to determine $\hat{J}(s)$ along $z:\tilde{I}\to M_4$. For this purpose we assume that
 \begin{align}
 g_0(\dot{z}(s),\hat{J}(s))=\,0.
 \label{orthogonality condition for J z}
 \end{align}
 Taking the derivative of this constrain along $z:I\to M_4$ it follows that
\begin{align*}
 0 & =\,\frac{d}{ds}\left(\,g_0\left(\dot{z}(s),\hat{J}(s)\right)\right)\\
 & =\,g_0\left(D_{\dot{z}}\,\dot{z}(s),\hat{J}(s)\right)+\,g_0\left(\dot{z}(s),D_{\dot{z}}\hat{J}(s)\right)\\
& =\,g_0\left(R( \hat{J}(s),\dot{z}(s))\cdot \dot{z}(s), \hat{J}(s)\right)+\,g_0\left(\dot{z}(s),D_{\dot{z}}\hat{J}(s)\right)\\
& =\,-g_0\left(R( \hat{J}(s),\dot{z}(s))\cdot \hat{J}(s),\dot{z}(s)\right)+\,g_0\left(\dot{z}(s),D_{\dot{z}}\hat{J}(s)\right).
\end{align*}
 This constrain can be satisfied if we declare the vector field $\hat{J}(s)$ is a solution of the differential equation
\begin{align}
D_{\dot{z}}\hat{J}(s)\,-R( \hat{J}(s),\dot{z}(s))\cdot \hat{J}(s)=0.
\label{extension of the Jacobi field}
\end{align}
This equation is subjected to the initial condition
\begin{align}
\hat{J}(0)=\,J(0).
\label{initial condition for hat J}
\end{align}
If $J:I\to M_4$ is a Jacobi field along $X:I\to M_4$, then $D_{\dot{X}}D_{\dot{X}}J|_{t=0}$ is spacelike and the initial condition \eqref{initial condition for hat J} ensures the orthogonality condition \eqref{orthogonality condition for J z} at the initial time $s=0$ and the equation \eqref{extension of the Jacobi field} ensures that such condition is preserved along $z:\tilde{I}\to M_4$. From now on, we assume $J(0)$ is the initial value of the Jacobi field $J:I\to M_4$ along the central geodesic $X:I\to M_4$.

In a compact way, the ordinary differential equations that determine the above construction are
\begin{align}
\begin{cases}
& D_{\dot{z}}\dot{z}(s)=\,R(\hat{J}(s),\dot{z}(s))\cdot \dot{z}(s), \\
& D_{\dot{z}}\hat{J}(s)=\,R(\hat{J}(s),\dot{z}(s))\cdot \hat{J}(s)
\end{cases}
\label{compact way differential equations}
\end{align}
Given the curvature endormorphism $R( \hat{J}(s),\dot{z}(s))$, the system of differential equations \eqref{compact way differential equations} together with the initial conditions
\begin{align}
\begin{cases}
& (z(0),\dot{z}(0))=\,(X(0),{X}'(0))\\
&\hat{J}(0)=\,J(0).
\end{cases}
\label{initial conditions for z hatJ}
\end{align}
  determine a timelike curve whose initial acceleration are the specified one: $z(0)=X(0)$, $\dot{z}(0)=X'(0)$ and the initial proper acceleration $D_{\dot{z}}\dot{z}|_{s=0}$ is the relative acceleration $a_r=\,D_{X'}D_{X'} \,J|_{t=0}$ of the central geodesic $X:I\to M_4$ at $t=0$.

The system of equations \eqref{compact way differential equations} can be expressed in local components by the equations
\begin{align}
 \begin{cases}
&\ddot{z}^\mu(s)=\,-\Gamma^\mu\,_{\nu\rho}(z(s))\,\dot{z}^\nu(s)\,\dot{x}^\rho(s)\,+\left(R(\hat{J}(s),\dot{z}(s))\cdot{\dot{z}}(s)\right)^\mu,\\
&\dot{\hat{J}}^\mu(s) =\,-\Gamma^\mu\,_{\nu\rho}(z(s))\,\dot{z}^\mu\,\hat{J}^\rho(s)+\left(R(\hat{J}(s),\dot{z}(s))\cdot{\hat{J}}(s)\right)^\mu,\\
& \mu,\nu,\rho =0,1,2,3.
\end{cases}
\label{coordinate for of the equations z hatj}
\end{align}
subjected to the initial conditions \eqref{initial conditions for z hatJ}.
Since the right hand sides of the  equations \eqref{coordinate for of the equations z hatj} are $C^1$-smooth functions on $(s,z,\dot{z},\hat{J})$, Peano existence theorem (see for instance \cite{Hartman}, pg. 10) can be applied as follows. First, in order to simplify the expressions, we can choose Fermi coordinates along $z:\tilde{I}\to M_4$ such that $\Gamma^\mu\,_{\nu\rho}(z(s))=0$. Second, let us note that the curvature endomorphism $R(\hat{J}(s),\dot{z}(s)):T_{z(s)} M_4\to T_{z(s)}M_4$ with $s\in\,\tilde{I}=\,[0,a]$, acts on the vector fields $\dot{z}(s)$ and $\dot{J}(s)$ in a $C^2$-smoothly way.  Let us denote by $u(s)$ the triple $\left(z(s),\dot{z}(s),\hat{J}(s)\right)$ and consider the bound
\begin{align}
 \nonumber \quad \quad \quad &\max\{  |g_0(R(\hat{J}(s),\dot{z}(s))\cdot \dot{z}(s),R(\hat{J}(s),\dot{z}(s))\cdot \dot{z}(s))|,\\
& |g_0(R(\hat{J}(s),\dot{z}(s))\cdot \hat{J}(s),R(\hat{J}(s),\dot{z}(s))\cdot \hat{J}(s))|\,s\in\,[0,a]\}<\,M^2(u(0)),
\label{Peano relation}
\end{align}
where $M(u(0))$ is a constant depending on the initial conditions $u(0)=(z(0),\dot{z}(0),\hat{J}(0))$. Third,
in a compact domain $K\subset M_4$ containing the image $z([0,a])$, we can define a distance function $d_K:K\times K \to \mathbb{R}$. The compact domain $K$ and the distance $d_K$ could depend upon de local coordinate system where we are applying Peano existence theorem. This distance function can be extended to define the distance between $u(0)$ and $u(s)$, that we denote by $d(u(s),u(0))$.
In these conditions, Peano existence theorem \cite{Hartman} can be applied. It implies the existence of the solution for the system \eqref{coordinate for of the equations z hatj} for the given initial conditions in the interval of parameter domain $[0,\alpha(X(0), J(0))]\subset \,[0,a]$, where in our case the constant $\alpha(X(0), J(0))$ is given by the expression
\begin{align*}
\alpha(X(0),J(0))=\,\min \left( a,\frac{\max \{d(u(s),u(0))\,s\in [0,a]\}}{M(u(0))}\right).
\end{align*}

The relative acceleration $a_r=\,D_{X'}D_{X'} J|_{t=0}$ is interpreted as the initial proper acceleration of a curve $z:[0,\alpha]\to M_4$. In order to apply the relation \eqref{bound in the acceleration}, it is necessary to assure that the Jacobi field $\hat{J}|_{t=0}=J(0)$ is compatible with two constraints:
\begin{itemize}
\item The parametrization condition $g_0(\dot{z}, \dot{z})=\,-1$, which is direct, since the parameter $s\in \,[0,\alpha]$ is independent of the proper parameter $t\in I$.

\item By the hypothesis of the Lemma, specifically, the assumption that the geometry of the spacetime is of maximal acceleration, there must be a minimal lapse of proper time $\delta \tau$.
\end{itemize}
Hence the value of the initial Jacobi field must be such that $\alpha(X(0),J(0))\geq \delta \tau$. This can be achieved easily by a change in the initial conditions $\hat{J}(0)\mapsto \lambda\,\hat{J}(0)$, such that for $\lambda$ small enough, it makes the ratio
\begin{align*}
 \frac{\max \{d(u_\lambda(s),u_\lambda(0))\,s\in [0,a]\}}{M(u_\lambda(0))}
  \end{align*}
  larger than $\delta \tau$, where in this expression $u_\lambda(s)$ (resp. $u_\lambda(0)$) are the solutions (resp. initial conditions) of the problem \eqref{coordinate for of the equations z hatj} with the initial conditions $u_\lambda(0)=\,(z(0),\dot{z}(0),\lambda \,J(0))$. This constraint implies a limit on  the size of the Jacobi equation that we can use, despite the fact that the Jacobi equation \eqref{Jacobi equation} is a linear equation.
  Indeed, under the change of initial condition $J(0)\mapsto \lambda\,J(0)$ for asymptotically large $|\lambda|$ we have that
\begin{align*}
\alpha(X(0),J(0))&\mapsto  \min \left( a,\frac{\max \{d(u_\lambda(s),u_\lambda(0))\,s\in [0,a]\}}{\,M(u_\lambda(0))}\right)\\
& \sim \,\frac{1}{|\lambda |^2}\,\frac{\max \{d(u_\lambda(s),u_\lambda(0))\,s\in [0,a]\}}{\,M(u(0))}\\
& \sim \,\frac{1}{|\lambda |}\,\frac{\max \{d(u(s),u(0))\,s\in [0,a]\}}{\,M(u(0))}< \,\delta \tau
\end{align*}
 for  large enough $|\lambda|$, in contradiction with the hypothesis that $\delta\tau$ is the minimal lapse of proper time. Therefore, not all the Jacobi vector fields $J(0)$ are compatible with the existence of a minimal value for the parameter $a=\delta \tau$.

From the above considerations, the relation \eqref{bound for relative accelerations} follows for Jacobi field $J$ with small enough norm. The result is general covariant, even if at some point of the proof we have made use of Fermi coordinates.

\end{proof}

The application of Lemma \ref{lema on bound of the relative acceleration}  at each $p\in M_4(reg)$ leads to the following
\begin{proposicion}Consider the spacetime with a metric of maximal acceleration $(M_4,g)$. Then there is a Jacobi field $J:I\to M_4$ along the timelike geodesic $X:I\to M_4(reg)$ compatible with the condition $g_0(X',X')=-1$ such that  the uniform bound
\begin{align}
g_0 (R(X',J)\cdot X',\,  R(X',J)\cdot X')\leq A^2_{\mathrm{max}},
\label{master relation for curvature bounds}
\end{align}
holds good along the geodesic,
where $R(\cdot,\cdot)$ is the curvature of the Lorentzian metric $g_0$.
\label{proposicion on bound in curvature}
\end{proposicion}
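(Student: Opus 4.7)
The plan is to reduce Proposition \ref{proposicion on bound in curvature} to Lemma \ref{lema on bound of the relative acceleration} by means of the Jacobi equation. First I would use \eqref{Jacobi equation} to substitute $D_{X'}D_{X'}J = -R(X',J)\cdot X'$ pointwise along $X$, giving the algebraic identity
\[
g_0(D_{X'}D_{X'}J,\,D_{X'}D_{X'}J) \;=\; g_0(R(X',J)\cdot X',\,R(X',J)\cdot X')
\]
at every $t\in I$. This turns any bound on the squared norm of the relative acceleration directly into the desired bound on the squared norm of $R(X',J)\cdot X'$.

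Next, to extend the base-point bound provided by Lemma \ref{lema on bound of the relative acceleration} to every point on the geodesic, I would exploit the hypothesis $X(I)\subset M_4(reg)$: every point $X(t_0)$ is regular, so by shifting the proper-time origin to $t_0$ (which is permitted since both the geodesic equation and the Jacobi equation are invariant under affine reparametrization of $X$, and the normalization $g_0(X',X')=-1$ persists), the hypotheses of Lemma \ref{lema on bound of the relative acceleration} are satisfied at $X(t_0)$. This produces, at each $t_0\in I$, a Jacobi field with sufficiently small initial value at $t_0$ for which \eqref{master relation for curvature bounds} holds at that point.

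The delicate step is passing from this family of pointwise bounds to a \emph{single} Jacobi field $J:I\to M_4$ for which \eqref{master relation for curvature bounds} holds at every $t\in I$. The key observation is that the Jacobi equation is linear: if $J_0$ is any Jacobi field along $X$, then so is $\lambda J_0$ for each $\lambda\in\mathbb{R}$, and one has
\[
g_0(R(X',\lambda J_0)\cdot X',\,R(X',\lambda J_0)\cdot X') \;=\; \lambda^2\,g_0(R(X',J_0)\cdot X',\,R(X',J_0)\cdot X').
\]
Since $X$ is confined to $M_4(reg)$, on any compact subinterval of $I$ the curvature components of $g_0$ along $X$ are bounded, and hence so is the right-hand side; taking $|\lambda|$ small enough forces the inequality \eqref{master relation for curvature bounds} to hold at every $t$ in the subinterval.

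The main obstacle will be reconciling this smallness of $\lambda$ with the smallness condition on $J(0)$ already imposed in the proof of Lemma \ref{lema on bound of the relative acceleration}, which is needed to guarantee that the auxiliary transversal curve $z$ exists for at least one minimal proper-time lapse $\delta\tau$ at \emph{each} base point along $X$. Because both demands push $J$ in the same direction, and because regularity of $X$ provides a uniform-in-$t_0$ lower bound on the existence time of $z$, the two conditions can be reconciled; making this rigorous requires a uniform version of the Peano estimate from the proof of Lemma \ref{lema on bound of the relative acceleration} valid along the whole geodesic.
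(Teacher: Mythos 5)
Your proposal follows essentially the same route as the paper. The paper's entire justification for Proposition \ref{proposicion on bound in curvature} is the single sentence preceding it --- ``The application of Lemma \ref{lema on bound of the relative acceleration} at each $p\in M_4(reg)$ leads to the following'' --- which is exactly your first two steps: substitute $D_{X'}D_{X'}J=-R(X',J)\cdot X'$ via the Jacobi equation \eqref{Jacobi equation}, so that the bound \eqref{bound for relative accelerations} on the relative acceleration becomes the bound \eqref{master relation for curvature bounds} on the curvature term, and then invoke the Lemma with the base point shifted to each $X(t_0)$, which is legitimate since every point of $X(I)\subset M_4(reg)$ is regular and the normalization $g_0(X',X')=-1$ is preserved under the shift. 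Your third and fourth paragraphs go beyond what the paper writes down: the paper never explains how the family of pointwise bounds, each obtained with its own Jacobi field anchored at a different $t_0$, is welded into a \emph{single} Jacobi field for which the inequality holds along the whole geodesic. Your scaling argument ($J\mapsto\lambda J$, quadratic homogeneity of the left-hand side, boundedness of the curvature on compacta) is a sensible way to close that gap on compact subintervals, and your observation that the smallness of $\lambda$ and the smallness of $J(0)$ required in the proof of Lemma \ref{lema on bound of the relative acceleration} push in the same direction is correct. What remains open in your sketch --- and is equally unaddressed in the paper --- is the uniformity issue when $I$ is not compact: regularity is a local condition (existence, around each point, of a compact set on which the curvature components are bounded), so it does not by itself furnish a single $\lambda>0$, nor a uniform lower bound on the Peano existence time $\alpha$, valid along all of $X(I)$. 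Your proposal is therefore at least as complete as the paper's argument and more explicit about where the genuine difficulty lies.
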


Let us consider a particular initial condition $(X(0), X'_0(0))$, with $p=\,X(0)\,\in M_4(reg)\,\subset M_4$ and $X'_0(0)\in\,T_p M_4$ is a timelike tangent vector such that $g_0(X'(0),X'(0))=\,-1$. Let $\{X'(0),J_b,\,b=1,2,3\}$ be a tangent basis for $T_{p}M_4$ and assume that $X'_0(0)$ is a timelike vector and $\{J_b(0)\,b=1,2,3\}$ are supplementary tangent vectors at $X(0)$, that without loss of generality, satisfy the relations
\begin{align*}
&g_0(X'_0(0), X'_0(0))=\,-1, \quad g_0(J_{a}(0),X'_0(0))=0,\quad  g_0(J_{a}(0),J_{b}(0))=\,\delta_{ab},\\
& a,b=1,2,3.
\end{align*}
Let $J:I\to M_4$ be a Jacobi field satisfying the conditions of proposition \eqref{proposicion on bound in curvature}. In terms of the above basis, we have
\begin{align*}
R(J(0),X'(0))\cdot X' & =\,\sum^3_{b=1}\sum^3_{\mu=0}\, R^{b}\,_{0\mu 0}\,X'(0)^0\,J^\mu(0)\,J_b(0)+ \,\sum^{3}_{\mu=0}\,R^0\,_{0\mu0}\,X'(0)^0\,J^\mu(0)\,X'(0)\\
&=\,\sum^3_{b=1}\sum^3_{\mu=0}\, R^{b}\,_{0\mu 0}\,J^\mu(0)\,J_b(0)+ \,\sum^{3}_{\mu=0}\,R^0\,_{0\mu0}\,J^\mu(0)\,X'(0)\\
&=\,\sum^3_{b=1}\sum^3_{\mu=0}\, R^{b}\,_{0\mu 0}\,J^\mu(0)\,J_b(0),
\end{align*}
since $X'(0)^0 =1$ in the above basis and $R^0\,_{0\mu 0}=0$ because the symmetries of the curvature endomorphism of the Riemann curvature.
Applying the orthonormal relations, we obtain for every admissible $J(0)$ compatible with the minimal proper time lapse hypothesis the relation
\begin{align*}
& g_0(R(J(0),X'(0))\cdot X', R(J(0),X'(0))\cdot X')\\
&=\,\sum^3_{a=1}\,\sum^{3}_{\mu,\nu=0}\,R^a\,_{0\mu0}\,R_{a0\nu 0}\, J^\mu(0)\,J^\nu(0)\\
 &=\,\sum^{3}_{\rho,\mu,\nu=0}\,R^\rho\,_{0\mu0}\,R_{\rho 0\nu 0}\, J^\mu(0)\,J^\nu(0)\leq \,A^2_{\mathrm{max}}.
\end{align*}
Every spacetime direction $U$ at $T_p M_4$ has associated an unique admissible spacelike vector $J_{\mathrm{max}}=\,\lambda_{\mathrm{max}}\,U(0)$ such that one can apply \ref{proposicion on bound in curvature}, leading to the relation
\begin{align*}
\sum^3_{a=1}\,\sum^{3}_{\mu,\nu=0}\,R^a\,_{0\mu0}\,R_{a0\nu 0}\, J^\mu_{\mathrm{max}}(0)\,J^\nu_{\mathrm{max}}(0)\leq A^2_{\mathrm{max}},
\end{align*}
where $J_{\mathrm{max}}(0)$ is the spacelike vector proportional to $J(0)$ compatible with the minimum time lapse $\delta \tau$ and of maximal modulus with $g_0$. Then we have the following
\begin{proposicion}
In a spacetime geometry of maximal acceleration $(M_4,g)$ and in the basis $\{X'(0), J_a,\,a=1,2,3\}$,  the Riemannian curvature  components are uniformly bounded in the sense that
\begin{align}
\sum^{3}_{\rho,\mu,\nu=0}\,R^\rho\,_{0\mu0}\,R_{\rho 0\nu 0}\, J^\mu_{\mathrm{max}}(0)\,J^\nu_{\mathrm{max}}(0)\leq A^2_{\mathrm{max}}\,\quad \mu,\,\nu,\rho=0,1,2,3
\label{bounds on curvature due to maximal acceleration}
\end{align}
 at any regular point $p\in\,M_4(reg)\subset \,M_4$ and maximal admissible spacelike vector $J_{\mathrm{max}}\,\in T_{p}M_4$.
 \label{proposicion on bound on curvatures}
\end{proposicion}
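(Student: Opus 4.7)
The plan is to deduce this component-wise bound by specialising Proposition \ref{proposicion on bound in curvature} to a $g_0$-orthonormal tetrad at a regular point and then reading off a quadratic form in the Jacobi initial data. Fix a regular point $p\in M_4(reg)$ and a timelike geodesic $X:I\to M_4(reg)$ of the Levi-Civita connection with $X(0)=p$ and $g_0(X'(0),X'(0))=-1$. Extend $X'(0)$ to an orthonormal tetrad $\{X'(0),J_1(0),J_2(0),J_3(0)\}$ at $T_pM_4$, so that latin indices label spacelike directions and the $0$-component corresponds to $X'(0)$. Any spacelike direction $U\in T_pM_4$ determines a family of candidate Jacobi initial data $\lambda U$; by Lemma \ref{lema on bound of the relative acceleration}, not all scalings are admissible, but the Peano-type estimate for $\alpha(X(0),\lambda J(0))$ shows that there is a maximal scaling $\lambda_{\mathrm{max}}$ for which the auxiliary timelike curve $z:\tilde{I}\to M_4$ exists over a parameter interval of length at least the minimal proper-time lapse $\delta\tau$. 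Write $J_{\mathrm{max}}(0)=\lambda_{\mathrm{max}}U$ for the corresponding maximal admissible datum.

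Next, I would apply Proposition \ref{proposicion on bound in curvature} to the Jacobi field $J:I\to M_4$ with $J(0)=J_{\mathrm{max}}(0)$: evaluated at $t=0$, this gives
\begin{align*}
g_0\bigl(R(X'(0),J_{\mathrm{max}}(0))\cdot X'(0),\,R(X'(0),J_{\mathrm{max}}(0))\cdot X'(0)\bigr)\le A^2_{\mathrm{max}}.
\end{align*}
Expanding $R(J_{\mathrm{max}}(0),X'(0))\cdot X'(0)$ in the tetrad yields
\begin{align*}
R(J_{\mathrm{max}}(0),X'(0))\cdot X'(0)=\sum_{\rho=0}^{3}\sum_{\mu=0}^{3}R^{\rho}{}_{0\mu 0}\,J^\mu_{\mathrm{max}}(0)\,e_\rho,
\end{align*}
where $e_0=X'(0)$ and $e_a=J_a(0)$, and $X'(0)^0=1$ has been used. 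The algebraic symmetries $R_{\alpha\beta\gamma\delta}=-R_{\beta\alpha\gamma\delta}$ force $R_{00\mu 0}=0$, hence $R^{0}{}_{0\mu 0}=0$ after index raising, so the $\rho=0$ term drops out and the expansion collapses to the spatial indices $\rho=a\in\{1,2,3\}$.

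Computing the $g_0$-inner product of the resulting spacelike vector with itself and using the orthonormality of the tetrad then gives
\begin{align*}
\sum_{a=1}^{3}\sum_{\mu,\nu=0}^{3}R^a{}_{0\mu 0}\,R_{a0\nu 0}\,J^\mu_{\mathrm{max}}(0)\,J^\nu_{\mathrm{max}}(0)=\sum_{\rho,\mu,\nu=0}^{3}R^\rho{}_{0\mu 0}\,R_{\rho 0\nu 0}\,J^\mu_{\mathrm{max}}(0)\,J^\nu_{\mathrm{max}}(0),
\end{align*}
where the second equality uses once more that the $\rho=0$ contribution vanishes. Combining this with the bound from Proposition \ref{proposicion on bound in curvature} yields \eqref{bounds on curvature due to maximal acceleration} at $p$, and since $p\in M_4(reg)$ was arbitrary the statement follows.

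The step I expect to be the main obstacle is the one that genuinely ties the analytic input of Lemma \ref{lema on bound of the relative acceleration} to the final geometric statement, namely, justifying that a \emph{single} constant $A^2_{\mathrm{max}}$ controls the quadratic form simultaneously for all spatial directions $U$ at $p$. The Lemma only gives pointwise, direction-dependent admissible initial data through the scaling estimate $\alpha(X(0),\lambda J(0))\sim |\lambda|^{-1}\alpha(X(0),J(0))$, so the uniformity must be encoded by placing the \emph{maximally scaled} vector $J_{\mathrm{max}}$ (rather than an arbitrary Jacobi initial datum) inside the quadratic form; once this reinterpretation is accepted the remainder of the proof is index bookkeeping. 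A secondary point that deserves care is the invariance of the conclusion under the choice of the auxiliary orthonormal tetrad, which is immediate since the contracted quantity $\sum_{\rho,\mu,\nu}R^\rho{}_{0\mu 0}R_{\rho 0\nu 0}J^\mu J^\nu$ only depends on the choice of timelike direction $X'(0)$ at $p$ and on $J_{\mathrm{max}}$.
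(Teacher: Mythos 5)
Your proposal is correct and follows essentially the same route as the paper: expand $R(J_{\mathrm{max}}(0),X'(0))\cdot X'(0)$ in the $g_0$-orthonormal tetrad $\{X'(0),J_a(0)\}$, use the curvature antisymmetry to kill the $\rho=0$ component, and apply Proposition \ref{proposicion on bound in curvature} to the maximal admissible datum $J_{\mathrm{max}}=\lambda_{\mathrm{max}}U$ for each spacelike direction $U$. Your closing remark on the direction-dependence of $\lambda_{\mathrm{max}}$ is exactly the point the paper addresses by asserting that every spacetime direction has a unique admissible vector of maximal modulus compatible with $\delta\tau$.
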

As a consequence the existence of a minimal proper time lapse $\delta \tau$ and proposition \ref{proposicion on bound on curvatures}, we have the following
\begin{teorema}
The components of the curvature endomorphism $R^\rho\,_{0\mu0}$ and the components of the curvature tensor $R_{\rho 0\nu 0}$ remain finite for any regular point $p\in\,M_4(reg)$.
\label{upper bound in the curvature theorem}
\end{teorema}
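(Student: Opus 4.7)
The plan is to extract finiteness of the individual components $R^\rho{}_{0\mu 0}$ and $R_{\rho 0 \nu 0}$ from the quadratic-form bound supplied by Proposition \ref{proposicion on bound on curvatures}. The central algebraic observation is that, in the orthonormal frame $\{e_0, e_1, e_2, e_3\} = \{X'(0), J_1, J_2, J_3\}$ used prior to Proposition \ref{proposicion on bound on curvatures}, the quantity
$$Q_{\mu\nu} := \sum_{\rho=0}^{3} R^\rho{}_{0\mu 0}\,R_{\rho 0\nu 0}$$
coincides with $g_0\bigl(R(e_\mu, X')\cdot X',\,R(e_\nu, X')\cdot X'\bigr)$. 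Since each $R(e_\mu, X')\cdot X'$ is $g_0$-orthogonal to $X'$ and therefore spacelike, the restriction of $Q_{\mu\nu}$ to the spatial hyperplane $\mathrm{span}\{J_1,J_2,J_3\}$ is positive semidefinite; by the symmetries of the Riemann endomorphism the components with $\mu$ or $\nu$ equal to $0$ vanish, so the only nontrivial entries are $Q_{ab}$ with $a,b\in\{1,2,3\}$.

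The first step I would take is to promote the bound in Proposition \ref{proposicion on bound on curvatures} from the maximal admissible vector $J_{\mathrm{max}}(U) = \lambda_{\mathrm{max}}(U)\,U$ to the unit direction $U$ itself. This requires a strictly positive lower bound $\lambda_{\mathrm{max}}(U) \geq \lambda_0 > 0$, uniform as $U$ ranges over the spatial unit sphere at $p$. Such a bound comes from revisiting the existence estimate of Lemma \ref{lema on bound of the relative acceleration}: regularity of $p$ means the curvature has bounded components on a compact neighborhood, so the constant $M(u(0))$ controlling $\alpha(X(0),J(0)) = \min(a,\max d/M(u(0)))$ is finite and locally uniform, and hence a nontrivial scale $\lambda_0$ still satisfies $\alpha \geq \delta\tau$.

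Once $\lambda_{\mathrm{max}}(U) \geq \lambda_0$ is secured, Proposition \ref{proposicion on bound on curvatures} yields $Q(U,U) \leq A^2_{\mathrm{max}}/\lambda_0^2 < +\infty$ for every spatial unit vector $U$. Selecting successively $U = J_a$ bounds the diagonal entries $Q_{aa}$, and using $U = (J_a \pm J_b)/\sqrt{2}$ with the polarization identity bounds the off-diagonal entries $Q_{ab}$. Because $Q_{aa} = \sum_{\rho}\bigl(R^\rho{}_{0a0}\bigr)^2$ is a sum of squares in the orthonormal frame, finiteness of $Q_{aa}$ directly bounds each $|R^\rho{}_{0a0}|$; the components $R^\rho{}_{000}$ vanish or are controlled by the algebraic symmetries of $R$ already invoked in the derivation preceding Proposition \ref{proposicion on bound on curvatures}. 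Lowering indices with the nondegenerate $g_0$ then transfers the bound from $R^\rho{}_{0\mu 0}$ to $R_{\rho 0\nu 0}$.

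The main obstacle is the first step, securing a strictly positive lower bound on $\lambda_{\mathrm{max}}(U)$ that does not itself implicitly use the finiteness we are trying to prove. The asymptotic estimate $\alpha \sim 1/|\lambda|$ extracted in Lemma \ref{lema on bound of the relative acceleration} only excludes large $|\lambda|$, so one must pin down the dependence of $M(u(0))$ on the local curvature data and use regularity of $p$ in an honest way to close the circle. Everything after that is standard linear algebra on the symmetric positive-semidefinite form $Q$, together with the algebraic symmetries of the Riemann tensor.
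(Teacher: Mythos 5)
Your reduction of the theorem to a strictly positive lower bound on $\lambda_{\mathrm{max}}(U)$ is exactly the skeleton of the paper's proof, and your endgame (polarization on the positive-semidefinite form $Q$, diagonal entries as sums of squares in the orthonormal frame, lowering indices with the nondegenerate $g_0$) is sound and in fact more explicit than anything the paper writes down. The gap is the step you yourself flag as the main obstacle. You propose to obtain $\lambda_{\mathrm{max}}(U)\geq\lambda_0>0$ from the observation that regularity of $p$ makes $M(u(0))$ finite and locally uniform. But boundedness of the curvature components on a compact neighbourhood is the \emph{definition} of a regular point, and it is also exactly the finiteness the theorem asserts; routing the lower bound on $\lambda_{\mathrm{max}}$ through that clause either makes the argument circular or collapses the theorem into a tautology. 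You acknowledge the circle but do not close it, so the proposal as written does not prove the statement.

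The paper closes it by a different mechanism that never invokes the curvature bound built into regularity. It notes that for $\|J_{\mathrm{max}}\|$ small, the constant $M(u(0))$ in the Peano existence estimate \eqref{Peano relation} is controlled by the term linear in $\hat J$, namely $g_0\bigl(R(\hat J,\dot z)\cdot\dot z,\,R(\hat J,\dot z)\cdot\dot z\bigr)$, and this is bounded by $A^2_{\mathrm{max}}$ by Proposition \ref{proposicion on bound on curvatures}, which is already established for admissible $J$. Hence $M(u(0))$ may be replaced by $A_{\mathrm{max}}$, and the minimal-proper-time-lapse requirement $\delta\tau\leq\alpha(X(0),J(0))$ becomes $A_{\mathrm{max}}\,\delta\tau\leq\max\{d(u(s),u(0))\}$. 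Since $d(u(s),u(0))$ depends smoothly on the initial datum and admits a first-order Taylor expansion in $\|J_0\|$, this inequality forces a strictly positive lower bound on $\|J_0\|$, hence on $\|J_{\mathrm{max}}\|$. That is the ingredient your proposal is missing: the lower bound on the admissible Jacobi data is extracted from $\delta\tau$ together with the acceleration bound, not from the regularity hypothesis. If you substitute that derivation for your first step, the rest of your argument goes through.
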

\begin{proof}
The tangent vector  $J_{\mathrm{max}}\in\,T_p M_4$ is different from zero and it has finite norm. This follows from the existence of a minimum lapse of proper time $\delta \tau$. Indeed one has
\begin{align*}
\delta \tau\, &\leq \alpha(X(0),J(0)) \leq \,\frac{\max \{d(u(s),u(0))\,s\in [0,a]\}}{\,M(u(0))}.
\end{align*}
But for $\|J_\mathrm{max}\|$ very small, $M(u(0))$ can be approximated by $A_{\mathrm{max}}$, because the form of the relation \eqref{Peano relation} that involves a linear term and a quadratic term in $\hat{J}$  and because the application of proposition \ref{proposicion on bound on curvatures}. Hence
\begin{align*}
A_{\mathrm{max}}\,\delta \tau\,  \leq \,\max\{d(u(s),u(0))\,s\in [0,a]\}.
\end{align*}
The distance function $d(u(s),u(0))$ is a smooth function on the solutions, that at the same time are smooth functions on the parameter $\|J_0\|$ (see for instance \cite{Hartman}, pg. 101) and therefore, on $\|J_\mathrm{max}\|$. The distance function can be expressed as a Taylor expansion up to first order in $\|J_0\|$ (the reminder term being of second order). Hence the above bound fix a lower bound for $\|J_0\|$ and hence for $\|J_\mathrm{max}\|$.
\end{proof}

Let us synthesise the content of this section as follows. First, the construction of the world line $z:\tilde{I}\to M_4$ shows how the relative acceleration $a_r$ can be interpreted as a proper acceleration. Hence we can apply the bound in the acceleration \eqref{bound in the acceleration}, a bound which is consistent with the linearity of the Jacobi equation \eqref{Jacobi equation}. But what it is further restrictive is that a maximal proper acceleration implies the existence of a minimal lapse of time, as discussed in \cite{Caianiello1984} and also in the sub-section \ref{Minimal lapse of proper time}. This leads to upper bounds on the curvature as in theorem \ref{upper bound in the curvature theorem}.

\section{The equivalence principle for spacetimes of maximal acceleration}
{\it Einstein equivalence principle} is on the basis of our current theories for the gravitational interaction. Following Thorne, Lee and Lightman \cite{ThorneLeeLightman}, the principle can be stated as follows:
\bigskip
\\
{\bf Einstein equivalence principle}. The following postulates hold good:
\begin{enumerate}
\item The weak equivalence principle or {\it universality of free falling} holds: {\it if an un-charged test body is placed at an initial event in spacetime and is given an initial velocity there, then its subsequent world line will be independent of its internal structure and composition.}

\item {\it 1. The outcome of any local, non-gravitational test experiment is independent of where and when in the universe is performed and 2. It is independent of the velocity of the free falling experimental apparatus where the experiment is realized.}

\end{enumerate}
The first part of the postulate in Einstein equivalence principle is full-filed if there is a connection $\widetilde{D}$ whose auto-parallel curves are in one to one correspondence with the world lines of free falling bodies. Then the  connection $\widetilde{D}$ must be independent of the composition and structure of the free falling test bodies. This form of the postulate is not difficult to accomplish in the category of Finslerian spacetimes, as discussed for example in \cite{Ricardo2017}. In such a framework, the Lorentzian models of spacetime are a sub-category and is where general relativity finds its mathematical formulation, but also note that in the Finslerian category there are connections violating the postulate. Hence the weak equivalence principle is already a restriction on the mathematical formalism.

The second postulate in Einstein equivalence principle presupposes the existence of smooth free falling local coordinate systems where the experimental set up can be constructed and where all the gravitational effects can locally be eliminated.  These free falling coordinates are usually mathematically implemented as the Fermi coordinate system along $X:\tilde{I}\to M_4$ representing the world line of the laboratory system. In such free falling coordinate systems, the outcomes of any experiment showing deviation from the free evolution of the studied physical systems indicates the presence of a non-gravitational field. This way of identifying non-gravitational fields fails when the dynamical effects of the connection cannot be eliminated locally in the free falling local coordinate system.

The second postulate in Einstein equivalence principle also introduces a strong constrain on the possible physical laws: the outcome of non-gravitational experiments must be independent of the state of motion of the free falling local coordinate systems. This part of the postulates implies that in local free falling coordinate frames, the laws describing physical phenomena are consistent with a theory of relativity. For example, the law could be consistent with Galilean relativity or could be consistent with Einstein special relativity.
Examples of gravitational theories obeying Einstein equivalence principle are Newton-Cartan theory and general relativity \cite{ThorneLeeLightman}.

If the connection $\widetilde{D}$ defined by free fall motion is either an affine connection defined over the spacetime $M_4$ or it is determined in a direct way by an affine connection on $M_4$, then the first postulate and the first part of the second postulate in Einstein equivalence principle can be implemented geometrically and rather independently of the second half of the second postulate. The second part formulated in the second postulate is full-filled if the spacetime geometry is consistent with a principle of relativity, for instance, either consistent with Galileo or with Einstein special relativity. These type geometries are to be found in Finsler spacetimes geometries, from which Lorentzian spacetime geometry is the standard case.

Theories of the gravitational interaction are based upon the principle that the effects due to gravity solely can be locally suppressed in local free falling coordinate systems.
This suggests an alternative formulation of the equivalence principle, that although is not as strong in the constrains that the Einstein principle imposes, it still captures this essential characterization of gravity. Modified in the following way, the new version of the equivalence principle can be putted to work under similar purposes than the original Einstein equivalence principle, but the new version can be accommodated to spacetimes with a maximal proper acceleration, where neither the Galilean relativity or special relativity hold. In this context, we propose to consider the following:
\bigskip
\\
{\bf New version of the equivalence principle}. The following two conditions hold good:
\begin{enumerate}
\item The weak equivalence principle or {\it universality of free falling} holds: {\it if an un-charged test body is placed at an initial event in spacetime and is given an initial velocity there, then its subsequent world line will be independent of its internal structure and composition.}

\item {\it Existence of smooth free falling local coordinate systems, where the experimental set up is constructed and where gravitational effects integrated in the connection can be locally eliminated.}

\end{enumerate}
In this new version of the equivalence principle, gravitational interactions are characterized exactly as the ones that for  non-charged point test particle, the given interaction can be {\it switch-off locally} in local free falling coordinate systems. By {\it switch-off} locally we mean that the effect on the dynamics is trivialized for world lines close enough to the given free fall coordinate system. A detailed specification of the concept is discussed in \cite{ThorneLeeLightman}, a treatment of this notion that we adopt as valid for the present work.

This provides a strong restriction on the possible geometric models, as we will see in the next section. However, let us note that the principle here considered can be extended in the form of a generalization or modification of the Einstein equivalence principle, a necessary step towards field equations involving spacetimes with metrics of maximal acceleration. Although this could imply phenomenological deviations from Einstein equivalence principle, we will not consider this problem in this work, leaving it for future research. We will here concentrate on the formal implementation of the new formulation of the equivalence principle.

\section{A geometric framework for spacetimes with a metric of maximal acceleration}
We consider now a geometric framework for spacetimes with maximal acceleration where the two conditions of the new version of the equivalence principle are satisfied.
The connection $\widetilde{D}$ will be a connection on a fibre bundle $\pi:\mathcal{E}\to M_4$. For instance, in the case of models of gravitational interaction based upon the theory of Finsler spacetimes \cite{Beem1970, GallegoPiccioneVitorio:2012}, the connections are defined on vector bundles over the slit tangent bundle $TM\setminus\{0\}\to M_4$. Similarly, in the case of spacetimes of maximal acceleration, $\widetilde{D}$ will be a connection defined on vector bundles over the $2$-jet bundle $J^2M_4$.

\subsection{The equivalence principle and the constrains on the connection}
Let us consider the {\it pull-back bundle} $p_1:\pi^*_2 TM_4\to J^2M_4$, defined by the commutative diagram
\begin{align*}
\xymatrix{\pi^*_2TM_4 \ar[d]_{p_1} \ar[r]^{p_2} &
 TM_4 \ar[d]^{\pi}\\
J^2M_4 \ar[r]^{\pi_2} & M_4.}
\end{align*}
$\pi^*_2J^2M_4$ is a vector bundle, where the fiber $p^{-1}_1(u)\subset \,\pi^*_2TM_4$ with  $u\in J^2M_4$, is diffeomorphic to the vector space $T_{\pi_2(u)}M_4=\pi^{-1}(x)$ for each $x=\pi_2(u)\in\,M_4$. Note that since $\pi_2:J^2M_4 \to M_4$ is a projection onto $M_4$, then
\begin{align*}
rank(d\pi_2)=\,\dim(M_4)=\,4.
\end{align*}

The connections that we will consider  on the vector bundle  $p_1:\pi^*_2 TM_4\to J^2M_4$ are inspired by analogous Finslerian  constructions \cite{BaoChernShen}. In a similar way as for Finslerian spacetime models of gravity  \cite{Ricardo2017}, the new form of the equivalence principle as stated in the previous section imposes  non-trivial constrains on the connection $\widetilde{D}$ defined on the vector bundle $\pi^*_2 TM_4$. In particular, the connection $\widetilde{D}$ must be compatible with the existence of smooth free falling local coordinate systems. A candidate for $\widetilde{D}$ with such property is the pull-back connection $\pi^*_2 D$ of the Levi-Civita connection of the Lorentzian metric $g_0=\,\lim_{A_{\mathrm{max}}\to +\infty}$, a connection which is defined by the relation
\begin{align}
\left(\pi^*_2 D\right)_X\,\pi^*_2 Z =\,\pi^*_2\left(D_{d\pi_2(X)}\,Z\right),\quad Z\in\,\Gamma TJ^2M_4, \,Z\in\,\Gamma \,TM_4.
\label{induced connection in pullback}
\end{align}

The introduction of free falling coordinate systems associated to the connection $\widetilde{D}$ requires first the discussion of several notions and results. Let us first introduce the following notion of auto-parallel condition in $\pi^*_2TM_4$,
\begin{definicion}
An auto-parallel curve in $J^2M_4$ is a curve $\gamma:[a,b]\to J^2M_4$ whose tangent vector field $X$ is such that $d\pi_2(X)=Z$ satisfies the auto-parallel condition $D_Z \,Z=0$ of the Levi-Civita connection $D$ of the Lorentzian metric $g_0$.
\label{autoparallel condition in second jet}
\end{definicion}
$\pi_2:J^2M_4\to M_4$ is a canonical projection and $D$ is the Levi-Civita connection on $M_4$. Hence we have the following
\begin{proposicion}
The following properties of the connection $\pi^*_2 D$ of the bundle $\pi^*_2 TM_4$ hold good:
\begin{enumerate}
\item The connection $\pi^*_2 D$ is a linear connection.

\item The connection coefficients of $\pi^*_2D$ live on the spacetime manifold $M_4$.

\item The auto-parallel condition \eqref{autoparallel condition in second jet} is equivalent to the auto-parallel condition of the Levi-Civita connection $D$.

\end{enumerate}
\label{properties of the induced connection}
\end{proposicion}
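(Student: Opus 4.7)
The strategy is to read all three assertions directly off the defining relation \eqref{induced connection in pullback}, using that $D$ is a linear connection on $TM_4$ and that $\pi_2:J^2M_4\to M_4$ is a smooth submersion. The main bookkeeping is to keep separate three different kinds of objects: vector fields on $J^2M_4$, sections of the pullback bundle $\pi^*_2 TM_4$, and vector fields on $M_4$. Items (1) and (2) then reduce to inspection of the local coordinate expression for $\pi^*_2 D$, while (3) requires identifying the base-projected tangent of a curve in $J^2M_4$ with a section of the pullback bundle along that curve.

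For (1) and (2), I would first extend $\pi^*_2 D$ from pullback sections to arbitrary sections of $\pi^*_2 TM_4$ via Leibniz. Writing $V = V^\mu(u)\,\pi^*_2\partial_\mu$ with $V^\mu \in C^\infty(J^2M_4)$ and $u\in J^2M_4$, the defining relation \eqref{induced connection in pullback} yields
\begin{align*}
(\pi^*_2 D)_X V \;=\; \bigl(X(V^\mu) + \Gamma^\mu_{\nu\rho}(\pi_2(u))\,Z^\nu V^\rho\bigr)\,\pi^*_2\partial_\mu,
\end{align*}
where $Z = d\pi_2(X)$ and $\Gamma^\mu_{\nu\rho}$ are the Christoffel symbols of $D$ on $M_4$. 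Additivity, $C^\infty(J^2M_4)$-linearity in $X$, and the Leibniz rule in $V$ all follow from the analogous properties of $D$ together with the linearity of $d\pi_2$, giving (1). The same local formula shows that the connection $1$-forms on $J^2M_4$ are precisely the pullback $\pi_2^*\bigl(\Gamma^\mu_{\nu\rho}(x)\,dx^\rho\bigr)$, so their coefficients depend only on $x = \pi_2(u)$ and are independent of the higher-jet fibre coordinates of $J^2M_4 \to M_4$; this is (2).

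For (3), let $\gamma:[a,b]\to J^2M_4$ have tangent $X=\dot\gamma$ and put $Z = d\pi_2(X)$ along the projected curve $x = \pi_2\circ\gamma$. Since the fibre of $\pi^*_2 TM_4$ over $\gamma(s)$ is canonically identified with $T_{x(s)}M_4$, the vector $Z(s)$ can be viewed as a section of $\pi^*_2 TM_4$ along $\gamma$. The defining relation then gives $(\pi^*_2 D)_X Z = \pi^*_2(D_Z Z)$, which vanishes if and only if $D_Z Z = 0$ --- exactly the condition in Definition \ref{autoparallel condition in second jet}. Conversely, any geodesic $x(s)$ of $D$ admits the canonical $2$-jet lift $\gamma(s) = j^2_s x$, whose base-projected tangent is the original geodesic tangent, so this lift is auto-parallel in the sense of the definition. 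The only non-cosmetic point --- which I would flag as the main obstacle, modest as it is --- is the clean separation between $\dot\gamma \in TJ^2M_4$ and its base projection $Z$ viewed as a section of $\pi^*_2 TM_4$ along $\gamma$; the verification of (3) is otherwise immediate from the pullback construction.
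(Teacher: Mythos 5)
Your proposal is correct and follows essentially the same route as the paper: all three items are read off the defining relation \eqref{induced connection in pullback}, with (1) and (2) obtained from the local coordinate expression showing the connection coefficients are the pulled-back Christoffel symbols of $D$, and (3) obtained by identifying $d\pi_2(\dot\gamma)$ with a pullback section and using that $\pi^*_2$ is injective on sections. You merely spell out the local formula and the Leibniz extension more explicitly than the paper, which states these steps as ``direct from the construction.''
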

\begin{proof}
The first property is direct from the construction of the pull-back connection.

The second property is proved using local natural coordinate systems $(\,^2x^A)$ in $TJ^2M_4$ and the definition \eqref{induced connection in pullback} of $\pi^*_2 D$.

The third property is direct from the definition of $\pi^*_2 D$ and the definition \ref{autoparallel condition in second jet} of the pull-back connection. Indeed,  if $D_Z\,Z=0$, then
\begin{align*}
\pi^*_2 D_X\,\pi^*_2 Z =\,\pi^*(D_{d\pi_2(X)})\,Z= \,\pi^*_2 \,\left(D_Z Z\right)=0.
\end{align*}
 Conversely,  since $\pi^*_2 Z_1 =\,\pi^*_2 Z_2$ iff $Z_1=\,Z_2$, then $\pi^*_2 D_X\,\pi^*_2 Z=0$ implies $D_Z \,Z=0$.
\end{proof}

Clearly, we can adopt as connection $\widetilde{D}$ in our theory of spacetimes with metrics of maximal acceleration the connection $\pi^*_2D$,
\begin{align}
\widetilde{D}:=\,\pi^*_2D.
\end{align}
To understand how this connection $\pi^*_2D$ admits free falling coordinate systems in the form of Fermi-like coordinates, we need to investigate some of its basic geometric properties.
\subsection{Properties of the connection $\widetilde{D}$}
We study now some basic properties of the connection $\pi^*_2 D$.
\subsection*{The Torsion tensor of the pull-back connection $\pi^*_2 D$}
There is no defined a conventional torsion tensor for a given connection $\nabla$ in $\pi^*_2J^2M_4$. However, a  close analogous tensor field along $\pi_2$  that one can consider is defined as follows. If $d\pi_2 (X_i)=\,Z_i$ are local vector fields on $U\subset \,M_4$, then
\begin{align}
\widetilde{T}_\nabla(X_i,X_j):=\,\nabla_{X_i} \,\pi^*_2Z_j-\,\nabla_{X_j} \,\pi^*_2Z_i-\pi^*_2([Z_i,Z_j]).
\end{align}
We can call $\widetilde{T}_\nabla$ the pull-back torsion tensor of the connection $\nabla$.
Then we have the following
\begin{proposicion}
The pull-back torsion tensor of $\pi^*_2D$ is zero,
\label{zero torsion for the induced connection}
\begin{align}
\widetilde{T}(X_i,X_j)=\,0,\quad\quad i,j=1,2,3,4.
\label{zero torsion for the pull-back condition}
\end{align}
\end{proposicion}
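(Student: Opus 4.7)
The plan is to unpack the definition of the pull-back torsion and reduce the statement to the classical torsion-freeness of the Levi-Civita connection $D$ on $M_4$.

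First I would choose local vector fields $X_1, X_2$ on an open set $V\subset J^2 M_4$ whose projections under $d\pi_2$ are the prescribed vector fields $Z_1,Z_2$ on $U=\pi_2(V)\subset M_4$. The defining identity \eqref{induced connection in pullback} for $\pi^*_2 D$ then lets me rewrite each of the two covariant-derivative terms purely in terms of $D$-derivatives on the base:
\begin{align*}
(\pi^*_2 D)_{X_1}\,\pi^*_2 Z_2 \;=\; \pi^*_2\bigl(D_{Z_1} Z_2\bigr),\qquad
(\pi^*_2 D)_{X_2}\,\pi^*_2 Z_1 \;=\; \pi^*_2\bigl(D_{Z_2} Z_1\bigr).
\end{align*}
Subtracting and using that $\pi^*_2$ is $\mathbb{R}$-linear on sections of $TM_4$ pulled back fibrewise, the first two terms of $\widetilde{T}(X_1,X_2)$ combine into $\pi^*_2\bigl(D_{Z_1} Z_2 - D_{Z_2} Z_1\bigr)$.

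Next, since $D$ is the Levi-Civita connection of the Lorentzian metric $g_0$, it is torsion-free on $M_4$, so
\begin{align*}
D_{Z_1} Z_2 - D_{Z_2} Z_1 \;=\; [Z_1,Z_2].
\end{align*}
Substituting this back and using the third term of the definition of $\widetilde{T}$ gives
\begin{align*}
\widetilde{T}(X_1,X_2) \;=\; \pi^*_2\bigl([Z_1,Z_2]\bigr) - \pi^*_2\bigl([Z_1,Z_2]\bigr) \;=\; 0,
\end{align*}
which is the claimed identity. Extending the two chosen vector fields to any pair $X_i, X_j$ in a local frame of projectable fields on $J^2 M_4$ (and noting that $\widetilde{T}$ is tensorial in its two arguments, so the value is independent of the chosen extensions of $Z_i,Z_j$) establishes \eqref{zero torsion for the pull-back condition} for all $i,j$.

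The only delicate point is the well-definedness of the expression $\widetilde{T}(X_i,X_j)$: one must verify that although $\widetilde{T}$ is defined on vector fields $X_i$ on $J^2 M_4$ that are $\pi_2$-related to base vector fields $Z_i$, the right-hand side depends only on $Z_i,Z_j$ and not on the chosen horizontal lifts. This follows directly from \eqref{induced connection in pullback}, because the pull-back connection contracts $X$ to $d\pi_2(X)$ before differentiating. Once this tensorial character is noted, the computation above is genuinely a one-line consequence of the vanishing of the torsion of $D$, and no further technical obstacle arises.
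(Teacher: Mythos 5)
Your proof is correct and follows essentially the same route as the paper: both rewrite the two covariant-derivative terms via the defining relation \eqref{induced connection in pullback}, factor out $\pi^*_2$, and invoke the torsion-freeness of the Levi-Civita connection $D$ to conclude. Your additional remark on the well-definedness of $\widetilde{T}$ with respect to the choice of lifts is a sensible clarification but does not change the substance of the argument.
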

\begin{proof}
For $\pi^*_2D$ we have that
\begin{align*}
\widetilde{T}(X_i,X_j)& =\,\pi^*_2 ( D_{X_i} \,\pi^*_2Z_j)-\,\pi^*_2( D_{X_j} \,\pi^*_2Z_i)-\pi^*_2([Z_i,Z_j])\\
& =\,\pi^*_2 ( D_{Z_i} \,Z_j-\, D_{Z_j} \,Z_i-[Z_i,Z_j])=0,
\end{align*}
since for the Levi-Civita connection $D_{Z_i} \,Z_j-\, D_{Z_j} \,Z_i-[Z_i,Z_j]=\,0$.
\end{proof}
\subsection*{Fermi coordinates for the pull-back connection $\pi^*_2 D$}
The theory of connections developed above accommodates the new version of the equivalence principle. Proposition \ref{properties of the induced connection} implies that auto-parallel curves of $\pi^*_2D$ are the geodesics of the Levi-Civita connection $D$, full-filling the first condition of the new version of  the equivalence principle. Moreover, the existence of local coordinates where all the gravitational effects encoded in the connection can be eliminated in small enough neighborhood, is realized by the existence of Fermi frames for $D$ along arbitrary geodesics $x:I\to M_4$. Since $rank(d\pi_2)=4$, this implies local frames of $J^2M_4$ along the corresponding lift $^2x:I\to J^2M_4$ where the connection coefficients of the pull-back connection $\pi^*_2D$ are zero.
\subsection*{Curvature and Jacobi fields of the pull-back connection $\pi^*_2 D$}
One can consider a generalization of the Jacobi equation for the connection $\pi^*_2 D$. The curvature endomorphism of $\pi^*_2D$ is defined by the expression
\begin{align}
&\widetilde{R}(X_1,X_2)\cdot \,\Xi:  =\,\left(\pi^*_2D_{X_1}\,\pi^*_2 D_{X_2}\,-\pi^*_2D_{X_2}\,\pi^*_2 D_{X_1}\,-\pi^*_2D_{[X_1,X_2]}\right)\cdot \Xi,\\
\nonumber & X_1,X_2\,\in \,\Gamma TJ^2M_4, \, \Xi\in\,\Gamma \pi^*_2TM_4.
\end{align}
Let us consider $X_1, X_2\in\,\Gamma TJ^2M_4$ such that $[X_1,X_2]=0$ with $d\pi_2 (X_1)=\,Z_1$ and $\pi_2 (X_2)=\,Z_2$. Then $[Z_1,Z_2]=0$ holds good. By the torsion-free property of proposition \ref{zero torsion for the induced connection},
\begin{align*}
\pi^*_2D_{X_1}\,\pi^*_2 D_{X_1}\,\pi^*_2 Z_2 & =\,\pi^*_2D_{X_1}\,\pi^*_2 D_{X_2}\,\pi^*_2 Z_1 \\
& =\,\pi^*_2D_{X_2}\,\pi^*_2 D_{X_1}\,\pi^*_2 Z_1-\,\widetilde{R}(X_2,X_1)\cdot \pi^*_2 Z_1.
\end{align*}
If the auto-parallel condition
\begin{align}
\pi^*_2 D_{X_1}\,\pi^*_2 Z_1 =0
\end{align}
holds good, then we have
\begin{align}
\pi^*_2D_{X_1}\,\pi^*_2 D_{X_1}\,\pi^*_2 Z_2 +\,\widetilde{R}(X_2,X_1)\cdot \pi^*_2 Z_1=0.
\label{Jacobiequation for induced connection}
\end{align}
Equation \eqref{Jacobiequation for induced connection} is the Jacobi equation for the
pull-back connection $\pi^*_2 D$.

The Jacobi fields of $D$ and the Jacobi fields of $\pi^*_2 D$ are related. First note that if $d\pi_2(X_i)=\,Z_i,\,i=1,2$, then from the definition of $\pi^*_2 D$ it follows that
\begin{align}
\pi^*_2D_{X_1}\,\pi^*_2 D_{X_1}\,\pi^*_2 Z_2=\,\pi^*_2D_{X_1}\left(\pi^*_2\left( D_{Z_1}\,Z_2\right)\right)=\,\pi^*_2\,\left( D_{Z_1}\,D_{Z_1}\,Z_2\right)
\label{pullback relation 1}
\end{align}
and as consequence
\begin{align}
\widetilde{R}(X_1,X_2)\cdot X_1=\,\pi^*_2\left( R(Z_1,Z_2)\cdot Z_1\right).
\label{pullback relation 2}
\end{align}
By direct application of these two relations, we can prove the following
\begin{proposicion}
The Jacobi fields of $\pi^*_2 D$ and $D$ are in one to one correspondence.
\end{proposicion}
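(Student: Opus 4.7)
The plan is to show that the Jacobi equation \eqref{Jacobiequation for induced connection} for $\pi^*_2 D$ along a lifted auto-parallel curve $\gamma: I\to J^2M_4$ is nothing but the $\pi^*_2$-pullback of the Jacobi equation \eqref{Jacobi equation} for $D$ along the projected geodesic $X=\pi_2\circ\gamma: I\to M_4$. Together with the injectivity of $\pi^*_2$ on sections of $TM_4$, this will establish the bijection.

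First I fix the setup. By Definition \ref{autoparallel condition in second jet} and the third item of Proposition \ref{properties of the induced connection}, lifting and projecting give a one-to-one correspondence between auto-parallels of $\pi^*_2 D$ in $J^2M_4$ and geodesics of $D$ in $M_4$. So let $\gamma:I\to J^2M_4$ be auto-parallel with $d\pi_2(X_1)=Z_1=X'$, where $X_1$ is the tangent to $\gamma$ and $X:I\to M_4$ is its projection. A Jacobi field of $\pi^*_2 D$ along $\gamma$ is a section of $p_1:\pi^*_2TM_4\to J^2M_4$ restricted to $\gamma$ satisfying \eqref{Jacobiequation for induced connection}; since the fiber of $\pi^*_2TM_4$ over $u\in J^2M_4$ is canonically identified with $T_{\pi_2(u)}M_4$, such a section along $\gamma$ is exactly a vector field along $X$, i.e.\ of the form $\pi^*_2 Z_2$ for a unique $Z_2$ along $X$. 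This gives the set-theoretic bijection at the level of sections; the content of the proposition is that the Jacobi conditions match.

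Now I apply the two pullback identities \eqref{pullback relation 1} and \eqref{pullback relation 2} directly to \eqref{Jacobiequation for induced connection}. The second covariant derivative term becomes $\pi^*_2\bigl(D_{Z_1}D_{Z_1}Z_2\bigr)$, and the curvature term becomes $\pi^*_2\bigl(R(Z_2,Z_1)\cdot Z_1\bigr)$ (this is the pullback identity applied to the swapped pair $(X_2,X_1)$; its validity follows from the very same argument used to derive \eqref{pullback relation 2}, since the curvature is $C^\infty(J^2M_4)$-linear and $\pi^*_2$ commutes with the skew-symmetrization in its first two slots). Putting these together, equation \eqref{Jacobiequation for induced connection} reads
\begin{align*}
\pi^*_2\Bigl(D_{Z_1}D_{Z_1}Z_2 + R(Z_2,Z_1)\cdot Z_1\Bigr)=0,
\end{align*}
which, up to the antisymmetry $R(Z_2,Z_1)=-R(Z_1,Z_2)$ that only affects the conventional placement of the sign in \eqref{Jacobi equation}, is the pullback of the Jacobi equation for $D$ along $X$.

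Finally, since the map $Z\mapsto \pi^*_2 Z$ is injective on sections of $TM_4$ (as noted just after the definition of $\pi^*_2 TM_4$, and already used in the proof of the third part of Proposition \ref{properties of the induced connection}), $\pi^*_2$ of an expression vanishes if and only if that expression vanishes. Hence $\pi^*_2 Z_2$ is a Jacobi field of $\pi^*_2 D$ along $\gamma$ if and only if $Z_2$ is a Jacobi field of $D$ along $X=\pi_2\circ\gamma$. Combined with the auto-parallel correspondence of Proposition \ref{properties of the induced connection}, this gives the asserted bijection. The only non-routine point is checking that the swapped pullback identity for the curvature holds, which I expect to be the main obstacle; everything else is a formal manipulation using Proposition \ref{zero torsion for the induced connection} and the injectivity of $\pi^*_2$.
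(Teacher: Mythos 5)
Your proof is correct and takes essentially the same route as the paper's: both arguments apply the pullback identities \eqref{pullback relation 1} and \eqref{pullback relation 2} to rewrite the Jacobi equation \eqref{Jacobiequation for induced connection} as the $\pi^*_2$-pullback of \eqref{Jacobi equation} along the projected geodesic, and then pass in both directions using the injectivity of $\pi^*_2$ on sections (as already used in Proposition \ref{properties of the induced connection}). You are somewhat more explicit than the paper about the swapped-argument form of the curvature identity and the identification of sections of $\pi^*_2 TM_4$ along the lift with vector fields along the base geodesic, but the substance of the argument is identical.
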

\label{Relation between Jacobi fields of D and induced connection}
\begin{proof}
If $X_2$ is a Jacobi field of $\pi^*_2D$ along $\gamma:I\to M_4$ with tangent vector field $\gamma' = X_1:I\to TJ^2M_4$ and such that $[X_1,X_2]=\,0$, then by the relations \eqref{pullback relation 1}-\eqref{pullback relation 2}, the Jacobi equation \eqref{Jacobiequation for induced connection} and the definition of auto-parallel condition $\pi^*_2 D X_1\,\pi^*_2 Z_1=\,0$, the field $Z_2 =\,d\pi_2 (X_2):I\to TM_4$ is a Jacobi field $\pi^*_2 Z_2$ along the curve $\pi_2\circ \gamma:I\to TM_4$.

By the same arguments, but reversing the implications, it is shown that for any Jacobi field $Z_2$ along a geodesic of $D$, there is a Jacobi field of $\pi^*_2 D$ along the lift $\gamma=\,^2x$ to the second jet bundle.
\end{proof}
\begin{comentario}
The results of this sub-section still hold if instead of $D$ we consider a torsion-free affine connection $D'$.
\end{comentario}
\subsection{Relation between the metric of maximal acceleration and the pull-back connection $\pi^*_2 D$}
According to the geometric framework discussed above, the properties of $D$ and $\pi^*_2 D$ are intimately related. It has also been convincingly showed that Jacobi fields of $\pi^*_2 D$ are determined by the Jacobi fields of $D$, the Levi-Civita connection of the Lorentzian metric $g_0$. We show now the relation between the metric of maximal acceleration $g$ and the pull-back connection $\pi^*_2 D$.

Instead of considering directly the metric $g=\,(1-\epsilon)\,g_0$, we will work with a {\it  deformed pull-back fiber metric} defined as acting on sections of $p_1:\pi^*TM_4\to J^2M_4$. This deformation is defined by the expression
\begin{align}
\tilde{g}:=\left(1-\epsilon\,\right)\,\pi^*_2g_0,
\label{pull-back metric}
\end{align}
where the pull-back metric $\pi^*_2 g_0 $ is defined by the relation
\begin{align*}
\pi^*_2g_0 (\pi^*_2 S,\pi^*_2 S):=\,g_0(S,S),\quad \, \forall \,S\in \,\Gamma TM_4.
\end{align*}
A direct computation of the covariant derivative of the pull-back metric \eqref{pull-back metric} leads to
\begin{align*}
\pi^*_2 D\,\tilde{g}=\,\pi^*_2 D \left((1-\epsilon)\,\pi^*_2 g_0\right)=\, d_4(1-\epsilon)\pi^*_2g_0 +\,(1-\epsilon)\,\pi^*_2D \left(\pi^*_2g_0\right),
\end{align*}
where $d_4 (1-\epsilon)$ is the differential of the function $(1-\epsilon):J^2 M_4\to \mathbb{R}$ only respect to the elements of $TJ^4M_4$ complement of $\ker d\pi_2$.
The covariant derivative of the pull-back metric is
\begin{align*}
\pi^*_2D \left(\pi^*_2g_0\right)=\,\pi^*_2(D\,g_0)=\,0,
\end{align*}
by the metric condition $D\,g_0=0$ of the Levi-Civita connection.  Let us introduce the $1$-form $\mathcal{W} \in \Lambda ^1 \,J^2M_4$ by
 \begin{align*}
 \mathcal{W}: =\,\,\frac{1}{1-\epsilon}\,d_4 (1-\epsilon).
 \end{align*}
   Hence we have the relation between $\pi^*_2 D$ and $\tilde{g}$
\begin{align}
\pi^*_2 D \,\tilde{g}=\,\mathcal{W}\otimes \,\tilde{g}.
\label{non-metricity of the pull-back connection}
\end{align}
If $\{X_A,\,A=1,2,...,12\}$ is a local frame in $J^2M_4$, then we have
\begin{align*}
\pi^*_2 D_{X_A} \,\tilde{g}=\,\mathcal{W}(X_A)\,\,\tilde{g}.
\end{align*}

 In dimension $\dim (M_4)=4$, the number of connection coefficients
 \begin{align*}
 \{\widetilde{\Gamma}^i\,_{Aj},\,A=1,...,12;\,i,j=1,2,3,4\}
 \end{align*}
  of an arbitrary connection on $\pi^*_2 TM_4$ is $12*4*4=\,192$. However, there are constrains arising from the definition of the pull-back connection $\pi^*_2 D$. In particular, we have the conditions
 \begin{align}
 \pi^*_2 D_{X_a} \,\pi^* S =0,\,\quad \forall X_a\in\, \ker (d \pi_2),\, \,S\in \,\Gamma TM_4.
 \label{conditions on the kern of piD}
 \end{align}
Since $\dim (\ker ( d \pi_2))=8$, these constrains impose  $8*4*4=128$ pointwise  conditions on the connection coefficients of the pull-back connection $\pi^*_2 D$.

The non-metricity condition implies a number of $4 * 10=\,40$ independent constrains.

The number of independent constrains on the connection coefficients introduced by the torsion free relations $\widetilde{T}(X_i,X_j)=0$ is $4 *4*3*1/2=\,24$. Thus the total number of constrains is $128+24+40=\,192$, which is enough and sufficient to determine completely the connection $\pi^*_2 D$.
Therefore, the constrains \eqref{zero torsion for the pull-back condition}, \eqref{conditions on the kern of piD} and \eqref{non-metricity of the pull-back connection} are enough to determine the pull-back connection $\pi^*_2 D$.
\begin{proposicion}
The pull-back connection $\pi^*_2 D$ is the unique linear  connection $\widetilde{\nabla}$  on the bundle $p_1: \,\pi^*_2 TM_4 \to J^2M_4$ determined by the conditions
\begin{itemize}
\item   The constrains
\begin{align}
 \pi^*_2 D_{X} \,\pi^* S =0,\,\quad \forall \,S\in \,\Gamma TM_4
 \label{constrains from pull-back of D}
 \end{align}
for $\{X_a,\,a=1,...,8\}$ a local frame for each kernel $\ker ( d \pi_2|_u)$ with $u\in J^2M_4$.

\item Torsion-free condition of the form
\begin{align}
\widetilde{\nabla}_{X_i} \,\pi^*_2Z_j-\,\widetilde{\nabla}_{X_j} \,\pi^*_2Z_i-\pi^*_2([Z_i,Z_j])=0,
\end{align}
for $d(X_i)=Z_i$ local vector frames.
\item The following {\it pull-back} conditions
\begin{align}
\widetilde{\nabla}_{X} \,\tilde{g}=\,\mathcal{W}(X)\,\,\tilde{g},
\label{pull-back conditions}
\end{align}
for every $X \in \Gamma TJ^2 M_4$.
\end{itemize}
\label{axioms for the connection}
\end{proposicion}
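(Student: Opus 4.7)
The plan is to split the proof into existence and uniqueness. For existence I would verify the three axioms for $\widetilde\nabla = \pi^*_2 D$ directly; for uniqueness I would form the difference $S = \widetilde\nabla - \pi^*_2 D$ of any two candidate connections, show $S$ is a tensor whose components obey the classical Levi-Civita symmetries, and conclude $S \equiv 0$ by a three-cycle Koszul-type computation.

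Existence is essentially bookkeeping using results already in the excerpt. The kernel axiom \eqref{constrains from pull-back of D} follows at once from the defining formula \eqref{induced connection in pullback}: when $d\pi_2(X) = 0$, $\pi^*_2 D_{X} \pi^*_2 S = \pi^*_2 (D_{0} S) = 0$. The torsion-free relation is exactly Proposition \ref{zero torsion for the induced connection}. The pull-back metricity condition \eqref{pull-back conditions} is precisely \eqref{non-metricity of the pull-back connection}, with $\mathcal{W}$ as defined there.

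For uniqueness, let $\widetilde\nabla$ satisfy the three axioms and set $S := \widetilde\nabla - \pi^*_2 D$. As the difference of two linear connections on $\pi^*_2 TM_4$, $S$ is an $\mathrm{End}(\pi^*_2 TM_4)$-valued 1-form on $J^2 M_4$. Subtracting the axioms pairwise I would obtain: (i) $S_X = 0$ for $X \in \ker d\pi_2$, so $S$ depends only on $d\pi_2(X)$; (ii) $S_{X_i} \pi^*_2 Z_j = S_{X_j} \pi^*_2 Z_i$ for horizontal lifts with $d\pi_2(X_i) = Z_i$, since the $\pi^*_2([Z_i,Z_j])$ terms cancel; and (iii) $S_X \tilde g = 0$, since the common $\mathcal{W} \otimes \tilde g$ contribution cancels. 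In components $S_{ijk} := \tilde g_{kl}\,(S_{X_i})^{l}{}_{j}$, these translate to symmetry in the first pair of indices and antisymmetry in the last pair. The standard cycling identity
\[
S_{ijk} = S_{jik} = -S_{jki} = -S_{kji} = S_{kij} = -S_{ikj} = -S_{ijk}
\]
then forces $S_{ijk} = 0$ on the open region $\{\epsilon < 1\} \subset J^2 M_4$ where $\tilde g$ is nondegenerate, and hence $\widetilde\nabla = \pi^*_2 D$ elsewhere by continuity of the connection coefficients.

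I expect the main obstacle to be organizational rather than technical: carefully disentangling tangent vectors on $J^2M_4$ (whose fibre has dimension $12$) from their projections on $M_4$ (dimension $4$), and verifying that the author's count $128 + 24 + 40 = 192$ really saturates the space of connection coefficients. The Koszul argument in effect performs this verification, showing that the three families of constraints are independent and jointly rigid. A secondary subtlety is that the metricity axiom only detects $S_X$ where $\tilde g$ is nondegenerate; this is precisely the physical range of the theory ($\epsilon < 1$), and uniqueness extends to the boundary $\epsilon = 1$ by continuity.
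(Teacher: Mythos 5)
Your proposal is correct, but it proves the proposition by a genuinely different and strictly stronger argument than the paper does. The paper's justification is the constraint-counting paragraph immediately preceding the statement: $12\cdot 4\cdot 4=192$ connection coefficients versus $128$ kernel conditions, $24$ torsion conditions and $40$ metricity-type conditions, with the conclusion drawn from the equality $128+24+40=192$. That count establishes neither independence nor consistency of the constraints, so by itself it shows neither existence nor uniqueness. Your argument supplies exactly what is missing: existence is checked by verifying the three axioms for $\pi^*_2 D$ (the kernel axiom from the defining relation \eqref{induced connection in pullback}, torsion-freeness from Proposition \ref{zero torsion for the induced connection}, and the metricity-type condition from \eqref{non-metricity of the pull-back connection}), and uniqueness follows from the difference tensor $S=\widetilde{\nabla}-\pi^*_2 D$: axiom one kills $S$ on $\ker d\pi_2$ so that $S$ factors through $d\pi_2$, axiom two gives symmetry of $S_{ijk}$ in the first pair, axiom three gives antisymmetry in the last pair, and the six-step cycling identity forces $S=0$ wherever $\tilde{g}$ is nondegenerate. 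Your closing remarks identify the two points the paper glosses over: the cycling argument is precisely the verification that the three families of constraints are jointly rigid (which the bare count $128+24+40=192$ only asserts), and the metricity axiom loses its grip on the locus $\epsilon=1$ where $\tilde{g}=(1-\epsilon)\pi^*_2 g_0$ degenerates, so uniqueness there must be recovered by continuity from the open dense set $\{\epsilon<1\}$. Both observations are correct and improve on the paper's treatment; the only caveat is that your existence step inherits whatever conventions the paper adopts for the truncated differential $d_4$ in the definition of $\mathcal{W}$, since the verification of \eqref{pull-back conditions} in kernel directions depends on it.
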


We would like to remark the similarity of this construction with the construction of the Chern connection in Finsler geometry \cite{BaoChernShen}.
However, we did not require the introduction of a non-linear connection in $J^2M_4$ in order to characterize the connection by the relations discussed in proposition \ref{axioms for the connection}. Instead, we introduce the conditions \eqref{constrains from pull-back of D}, that together with torsion free and metric non-compatibility conditions, are enough to determine the connection consistently.

\section{Discussion}
 In this work, a formal theory for spacetimes whose metric structure admits a maximal acceleration has been investigated  in the framework of metrics depending on higher order jets. We have adopted the viewpoint that the existence of a maximal proper acceleration should be imprinted in the geometry of the spacetime, in a similar way as the existence of a maximal speed is imprinted in the causal structure of the spacetime. The consequence of this assumption is that the metric spacetime in our models is more malleable than in current models of classical physics and quantum field theory, since the metric of maximal acceleration depend upon the way the spacetime is tested.

The spacetimes of maximal acceleration that we have considered are not necessarily originated by corrections to general relativity due to a quantum origin of gravity or quantum mechanical frameworks. Indeed, we have discussed uniform upper bounds in the Riemannian curvature in the contest of {\it classical geometries of maximal acceleration}, where the fundamental notion of metric of maximal acceleration is a classical object, in contrast with arguments involving quantum gravity \cite{RovelliVidotto}, string theory \cite{ParentaniPotting, BowickGiddins, FrolovSnachez1991} or quantum mechanics \cite{Brandt1983, Brandt1989, Caianiello, CaianielloFeoliGasperiniScarpetta}.

 We have shown that the existence of an universal bound for the proper acceleration of test particles implies that certain combinations of the Riemannian curvature components must remain uniformly bound in certain sense in the whole region of the spacetime where the curvature is defined and finite.  The combinations of the components of the Riemannian curvature are bilinear combinations that involve certain Jacobi fields. Note that the result obtained does not apply to all the components of the curvature tensor, although if further symmetries are considered, it could led to bounds on combinations involving more curvature components. This result applies rather generally. This result applies as long as the metric acts on tangent vectors by the expression \eqref{maximalaccelerationmetric0} and the additional assumptions of section 3 hold. Specially relevant, is to be able to interpret the maximal acceleration as determining a minimal proper time in theories where local speed is uniformly bounded by the speed of light in vacuum, following the argument from Caldirola \cite{Caldirola1981}.

  It is worth good to remind here that the theory presented in section 2 differs from other theories where the proper acceleration is also uniformly bounded, as in Caianello's theory \cite{Caianiello,CaianielloFeoliGasperiniScarpetta} and Brandt's theory \cite{Brandt1983,Brandt1989}. In those theories, the upper bound on the proper acceleration has a quantum origin, while in our theory the starting point is a spacetime arena consistent with possible violations of Einstein clock hypothesis, and we end up with a theory for metrics of maximal acceleration as a pre-eminent case. Also, the values for the maximal acceleration are different in Caianiello's theory \cite{Caianiello,Caianiello1984}, Brandt's theory \cite{Brandt1983} and in classical electrodynamic theories with higher order fields, for instance as in \cite{Ricardo2017b}. On the other hand, it is direct that the formal expressions for the metrics coordinates essentially coincide with a covariant form of Caianiello's theory \cite{Ricardo2007}. Therefore, if the requirements of section 3 are meet in Caianello's and Brandt's theories, then our theory developed in section 3 can apply to them, and the combinations of curvature discussed in section 3 will apply. Specifically, two requirements must hold.  The first is that in the formal limit $A^2_{\mathrm{max}}\to +\infty$, it holds that $g\to g_0$, where $g_0$ is a spacetime metric structure compatible with clock hypothesis. In name of simplicity, this structure has been assumed to be a Lorentzian metric. This requirement, is meet by Caianiello's and Brandt's theories of maximal acceleration. Second, the theories where the results of section 3 apply, must be compatible with the hypothesis of an uniform minimal lapse of proper time. It is known that in Brandt's theory and in Caianiello's theory, being relativistic theories in the above sense of being consistent with the existence of a maximal speed, they are also consistent with the existence of an uniform minimal lapse of proper time.  In the case of Caianiello's theory, this was discussed in \cite{Caianiello1984}, while for the case of Brandt's theory make use of the relation of Compton wave length and Schwarzschild radius \cite{Brandt1983}, which is implicitly consistent with a minimal lapse of time.

In section 5, a geometric framework for spacetimes endowed with a metric of maximal acceleration has been developed. The driving principle is a new version of the equivalence principle, that, although less strong than Einstein equivalence principle, is naturally fitted for spacetimes with a higher order jet metric, including Finsler spacetimes \cite{GallegoPiccioneVitorio:2012,Ricardo2017} and the spacetimes considered in this paper. Then we have introduced a connection and show how a geometric theory based on such connection full-fills the new form of the equivalence principle and how the connection is determined by natural conditions and the form of the metric of maximal acceleration.

The theory of section 5 has been developed thinking in the theory of spacetimes endowed with a maximal proper acceleration discussed in section 2, but one can raise the question if the geometric theory of section 5 can also be applied to  Caianiello's theory \cite{Caianiello,CaianielloFeoliGasperiniScarpetta} and to Brandt's theory \cite{Brandt1983,Brandt1989}.As we mentioned before, since the metric structures are formally the same in these theories (if we consider the covariant form of Caianiello's theory discussed in \cite{Ricardo2007}, one can apply the same methods to these theories and the geometric theory developed in section 5 is applicable. However, one needs to say that such interpretation of Caianiello's and Brandt's theories is not as compelling, because section 5 is mainly driven by our reformulation of the equivalence principle discussed in section 4, which was driven by the formal structure of our theory of maximal acceleration discussed in section 2.

A criticism that is commonly raised to theories with a maximal proper acceleration is that, although the idea is very fertile in consequences, there is no empirical evidence for it. However, a new approach to observe effects of maximal acceleration in laser-plasma dynamics has been recently discussed \cite{Ricardo2019} in the contest of the higher order jet electrodynamics developed by the author \cite{Ricardo2012,Ricardo2015,Ricardo2017b}. In ref. \cite{Ricardo2019} it was argued that several modifications of the Lorentz force due to maximal acceleration are potentially testable in near future laser-plasma acceleration facilities. A similar idea, but developed in the contest of Caianiello's models was developed in the reference \cite{FeoliLambiasePapiniScarpetta1997}, leading to different testable predictions than in \cite{Ricardo2019}. Confirmation of such qualitative dynamical effects in modern laser-plasma facilities could show for first time deviations from a local Lorentzian structure of spacetime.

Other developments of our theory are the following. Among the conditions that characterize the connection $\pi^*_2 D$ discussed in section 5  is the non-metricity property \eqref{non-metricity of the pull-back connection}. Such non-metricity property implies consequences at a cosmological and astrophysical level, leading to constraints on the value of the maximal acceleration, in a similar way, for instance, as the non-metricity due to spacetime defects leads to constraints on their density distribution of spacetime defects in FRW-like models \cite{Hossenfelder-Ricardo2018}. In the case of models where the maximal acceleration is fixed by the theory, such constraints will become genuine predictions. However, in our that this type of research can be developed in full deepness, a sound theory of gravitation in the framework of spacetimes with higher order jet geometry, the corresponding field equations and the FRW-like solutions is need to be developed first.

As we have mentioned before, a fundamental point missing in this paper and generally, in our approach to higher order fields theories, is the formulation a of complete gravitational model in the framework of spacetimes with a  maximal acceleration. One can partially address this issue by considering models of maximal acceleration already developed by E. Caianiello and co-workers \cite{CaianielloFeoliGasperiniScarpetta}, that although based upon a mathematical formalism which is not general covariant and based upon different assumptions than higher order jet field theory as discussed in \cite{Ricardo2012} and in this paper, it could be an inspiring source towards a consistent theory of spacetimes with metrics of maximal acceleration. However, we think that as happened with the development of  general relativity, only the combination of a powerful set of physical principles expressed through  mathematical rigourous notions and methods could bring light on the new field equations of motion for spacetimes with metrics of maximal acceleration.

Other problems to be considered in further research include a comprehensive treatment of singularities and its eventual resolution in the framework of metrics with maximal acceleration.
Also, the aspects of the thermodynamics of black holes that should be modified by the existence of a maximal acceleration have not been considered in this paper.

\subsection*{Acknowledgements} I would like to acknowledge to S. Vergara Molina for the continuous interest in the theory of maximal acceleration.

\end{document}